\newtheorem{theorem}{Theorem}
\newtheorem{lemma}[theorem]{Lemma}
\newtheorem{proposition}[theorem]{Proposition}
\theoremstyle{remark}
\newtheorem*{switching lemma}{Switching Lemma}
\newtheorem*{open problem}{Open Problem}
\theoremstyle{definition}
\newtheorem{definition}[theorem]{Definition}
\newtheorem*{definition*}{Definition}
\DeclareMathOperator{\dom}{dom}
\DeclareMathOperator{\im}{im}
\title{Failure of Feasible Disjunction Property for $k$-DNF Resolution and NP-hardness of Automating It}
\author{Michal Garl\'{\i}k\thanks{Funded by European Research Council (ERC) under the European Union's Horizon 2020 research and innovation programme, grant agreement ERC-2014-CoG 648276 (AUTAR).}
}
\affil{Dept. Ci\`encies de la Computaci\'o\\
Universitat Polit\`ecnica de Catalunya\\
C. Jordi Girona, 1-3\\
08034 Barcelona, Spain\\
email: \texttt{mgarlik@cs.upc.edu}
}
\begin{document}

\maketitle

\begin{abstract}
We show that for every integer $k \geq 2$, the $\textnormal{Res}(k)$ propositional proof system does not have the weak feasible disjunction property. Next, we generalize a recent result of Atserias and M\"uller \cite{atserias-muller2019-focs} to $\textnormal{Res}(k)$. We show that if NP is not included in P (resp. QP, SUBEXP) then for every integer $k \geq 1$, $\textnormal{Res}(k)$ is not automatable in polynomial (resp. quasi-polynomial, subexponential) time.
\end{abstract}

\section{Introduction}

Following Pudl\'ak \cite{Pudlak2003}, a proof system $P$ has \emph{weak feasible disjunction property} if there exists a polynomial $p$ such that if a formula $A \lor B$, in which $A$ and $B$ do not share variables, has a $P$ proof of length $t$, then either $A$ or $B$ has a $P$ proof of length $p(t)$. We deal with refutation systems in this paper, which for the preceding definition amounts to replacing in it `$\lor$' by `$\land$' and `proof' by `refutation'. It is known and easy to see that resolution has the weak feasible disjunction property. Resolution also has feasible interpolation, a prominent concept in proof complexity introduced by Kraj\'{i}\v{c}ek \cite{Krajicek1994, Krajicek1997}. A refutation system $P$ has \emph{feasible interpolation} if there is a polynomial $p$ and an algorithm that when given as input a refutation $\Pi$ of size $r$ of a CNF $A(\overline{x}, \overline{y}) \land B(\overline{x}, \overline{z})$, where $\overline{y}, \overline{x}, \overline{z}$ are disjoint sets of propositional variables, and a truth assignment $\sigma$ to the variables $\overline{x}$ outputs in time $p(r)$ a value $i \in \{0,1\}$ such that if $i=0$ then $A \! \restriction \! \sigma$ is unsatisfiable and if $i=1$ then $B \! \restriction \! \sigma$ is unsatisfiable. Here $F \! \restriction \! \sigma$ denotes the formula obtained from $F$ by an application of a partial truth assignment $\sigma$ to the variables of $F$ that are in the domain of $\sigma$.

Pudl\'ak \cite{Pudlak2003} comments that so far the weak feasible disjunction property has been observed in all proof systems that were shown to have feasible interpolation. This is because known feasible interpolation algorithms, like those in Chapter 17.7 in \cite{Krajicek2019-book}, actually construct a refutation of one of the conjuncts.

A proof system $P$ is \emph{polynomially bounded} if there is a polynomial $p$ such that any tautology of size $r$ has a $P$ proof of size $p(r)$. A fundamental problem in proof complexity is to show that no polynomially bounded proof system exists. This is equivalent to establishing $\textnormal{NP} \neq \textnormal{coNP}$, as observed by Cook and Reckhow \cite{cook-reckhow1979}. There is a potentially useful observation by Kraj\'{i}\v{c}ek \cite{Krajicek2011} that for the purpose of proving that some proof system $P$ is not polynomially bounded we may assume without a loss of generality that $P$ admits the weak feasible disjunction property. This readily follows from the fact that if a disjunction of two formulas that do not share variables is a tautology, then one of the disjuncts is. 

A propositional version of the negation of the \emph{reflection principle} for a proof system $P$ is a conjunction of a propositional formula expressing that `$\overline{z}$ is a satisfying assignment of formula $\overline{x}$ of length $r$' and a propositional formula expressing that `$\overline{y}$ is a $P$ refutation of length $t$ of formula $\overline{x}$ of length $r$'. Here $P, t, r$ are fixed parameters and $\overline{x},\overline{y},\overline{z}$ are disjoint sets of variables. When we plug in for the common variables  $\overline{x}$ some formula $F$ of length $r$, we denote the conjunction by $\textnormal{SAT}^F \land \textnormal{REF}^F_{P,t}$, and we call the second conjunct a $P$ \emph{refutation statement for} $F$. We need to define one very mild requirement on a proof system in order to state a result from \cite{Pudlak2003} about the weak feasible disjunction property that is the main source of motivation for this paper. 
We say that $P$ is \emph{closed under restrictions} if there is a polynomial $p$ such that whenever $F$ has a $P$ proof of length $t$ and $\sigma$ is a partial truth assignment to the variables of $F$, then there is a $P$ proof of $F \! \restriction \! \sigma$ of length at most $p(t)$.

There is a proposition proved in \cite{Pudlak2003} saying that if a proof system $P$ has the weak feasible disjunction property, has polynomial-size proofs of the reflection principle for $P$, is closed under restrictions, and has the property that given a $P$ proof of $\neg \textnormal{SAT}^{\neg F}$ there is at most polynomially longer $P$ proof of $F$, then for every formula $F$ and every integer $t$ which is at least the size of $F$, either there is a $P$ proof of $F$ of length $t^{O(1)}$, or there is a $t^{O(1)}$ long $P$ proof of $\neg \textnormal{REF}^{\neg F}_{P,t}$. Pudl\'ak comments that the conclusion of this proposition seems unlikely (and therefore it seems unlikely that a proof system satisfying the remaining three reasonable properties has the weak feasible disjunction property). He concludes that the weak feasible disjunction property is very unlikely to occur unless the system is very weak. Motivated to find and emphasize the contrast between resolution and $\textnormal{Res}(2)$ (see Section \ref{sec:preliminaries}) in this respect, we show the following theorem.

\begin{theorem}
\label{thm:res(k)-no-wfdp}
For every integer $k \geq 2$, $\textnormal{Res}(k)$ does not have the weak feasible disjunction property. Moreover, there are families $\{A_n\}_{n \geq 1}$ and $\{B_{n,k}\}_{n \geq 1, k \geq 1}$ of CNFs, where $A_n$ has size $n^{O(1)}$, $B_{n,k}$ has size $n^{O(k)}$, and $A_n$ and  $B_{n,k}$ do not share any variables, such that all the following hold:
\begin{enumerate}[(i)]
\item \label{item:lower_bound_on_A_n} There exists $\alpha > 0$ and an integer $n_1$ such that for every $k \geq 1$ and $n \geq n_1$, any $\text{Res}(k)$ refutations of $A_n$ has size greater than $2^{n^{\alpha}}$.
\item \label{item:lower_bound_on_B_n} For every $k \geq 1$ there is $\beta > 0$ and an integer $n_2$ such that for every $n \geq n_2$, any $\textnormal{Res}(k)$ refutation of $B_{n,k}$ has size greater than $2^{\beta n}$. 
\item \label{item:upper_bound_on_A_n_and_B_n} For all integers $n \geq 1$ and $k \geq 1$, $A_n \land B_{n,k}$ has a $\textnormal{Res}(2)$ refutation of size $O(k^2 n^{7k+7})$.
\end{enumerate}
\end{theorem}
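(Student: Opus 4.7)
\medskip
\noindent\textbf{Proof plan.}
The strategy is to exhibit explicit CNF families $\{A_n\}$ and $\{B_{n,k}\}$ whose individual hardness for $\textnormal{Res}(k)$ contrasts with a short $\textnormal{Res}(2)$ refutation of their conjunction. For $A_n$ I would take a CNF known to be exponentially hard for $\textnormal{Res}(k)$ uniformly in $k$ over the relevant range; natural candidates are a version of the bit-pigeonhole principle or a Tseitin formula on a bounded-degree expander. The $2^{n^{\alpha}}$ lower bound needed for part (\ref{item:lower_bound_on_A_n}) would follow by appealing to (a mild adaptation of) the Segerlind--Buss--Impagliazzo switching lemma for $k$-DNFs, with $\alpha$ chosen small enough that the bound holds for every $k$ in the relevant regime.

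For $B_{n,k}$ the plan is a reflection-style construction on a disjoint set of variables $\overline{z}$: $B_{n,k}$ is a CNF of size $n^{O(k)}$ encoding the existence of a purported ``compressed'' refutation---whose inference steps are structured so that $\textnormal{Res}(2)$ can simulate them---of a CNF linked to $A_n$. The size threshold for the encoded object is chosen below the $\textnormal{Res}(k)$ lower bound obtained in part (\ref{item:lower_bound_on_A_n}), so that $B_{n,k}$ is itself unsatisfiable. Part (\ref{item:upper_bound_on_A_n_and_B_n}) then follows by the standard ``execute the encoded refutation'' argument: $\textnormal{Res}(2)$ uses $2$-term clauses to maintain the current line of the derivation described by $\overline{z}$, consults the clauses of $B_{n,k}$ to identify the next inference, and uses the axioms of $A_n$ to validate it; after $n^{O(k)}$ steps, each of polynomial size, one reaches the empty clause. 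Careful bookkeeping then yields the stated $O(k^2 n^{7k+7})$ bound.

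For part (\ref{item:lower_bound_on_B_n}), the intuition is that without access to $A_n$'s axioms $\textnormal{Res}(k)$ cannot unpack the encoded refutation and must instead demonstrate by itself that no purported compressed refutation satisfies all the step-by-step consistency constraints. I would prove the $2^{\beta n}$ lower bound via a random restriction argument tailored to $B_{n,k}$: apply the $k$-DNF switching lemma to collapse wide $k$-terms, reduce the task to a resolution-width lower bound for a suitable residual formula, and choose $\beta$ depending on~$k$.

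The main obstacle is calibrating the encoding of $B_{n,k}$ so that (\ref{item:lower_bound_on_B_n}) and (\ref{item:upper_bound_on_A_n_and_B_n}) hold simultaneously: $B_{n,k}$ must be rigid enough that $\textnormal{Res}(2)$ can decode it when given $A_n$'s axioms, yet opaque enough that $\textnormal{Res}(k)$ alone cannot shortcut the refutation. Most of the delicate work will lie in the lower-bound argument for $B_{n,k}$, where one has to ensure that restrictions killing $A_n$'s structure do not simultaneously trivialize $B_{n,k}$'s consistency constraints, and in producing an encoding whose ``simulation cost'' in $\textnormal{Res}(2)$ matches the exponent $7k+7$.
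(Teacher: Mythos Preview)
Your high-level structure---reflection-style $B_{n,k}$, simulation upper bound, switching-plus-width lower bound---matches the paper's, but you are missing the concrete mechanism that makes all three items hold simultaneously. In the paper, $A_n$ is $\textnormal{SAT}^{\widetilde{n},r}\!\restriction\!\gamma_F$ and $B_{n,k}$ is the \emph{$k$-fold relativization} $\textnormal{R}^k\textnormal{REF}^F_{s,t}$ of the resolution refutation statement, both with $F=\neg\textnormal{PHP}^{n+1}_n$ hardcoded via the restriction $\gamma_F$ to the $C$-variables (so after restriction $A_n$ and $B_{n,k}$ share no variables). The relativization adds $k$ independent selection bits $S_1(i,j),\ldots,S_k(i,j)$ per clause-slot, and this is precisely what makes $B_{n,k}$ hard for $\textnormal{Res}(k)$: a random restriction sets each $S_u(i,j)$ to $0/1$ independently, so a given slot is deselected with probability $1-2^{-k}$, which is what drives the switching argument. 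Your vague ``compressed refutation'' placeholder does not supply this; a plain, unrelativized $\textnormal{REF}^F_{s,t}$ would not support the $\textnormal{Res}(k)$ lower bound you need. Meanwhile the extra $\bigvee_u \neg S_u(i,j)$ disjuncts are harmless for the $\textnormal{Res}(2)$ upper bound (Theorem~\ref{thm:refl-princ-Upper-bound}), so (\ref{item:upper_bound_on_A_n_and_B_n}) still goes through.

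Two further corrections. First, the switching lemma is not the off-the-shelf SBI lemma: because the $V,I,L,R$-variables of $\textnormal{REF}^F_{s,t}$ encode functions, restrictions must set them in whole groups, so the paper builds decision trees over $\textnormal{REF}^F_{s,t}$ querying entire home-pairs and proves an \emph{index-width} lower bound (Theorem~\ref{thm:width_lb}) rather than an ordinary width bound; you should not expect SBI to apply directly. Second, your stated ``main obstacle''---that restrictions killing $A_n$'s structure might trivialize $B_{n,k}$---does not arise, since $A_n$ and $B_{n,k}$ share no variables and the lower bound for $B_{n,k}$ never touches $A_n$. Finally, for (\ref{item:lower_bound_on_A_n}) you need a size lower bound uniform in $k$; SBI-style bounds have exponents that degrade with $k$, so the paper instead invokes the bounded-depth Frege lower bounds for PHP \cite{Krajicek-Pudlak-Woods1995,Pitassi-Beame-Impagliazzo1993}, which cover all $\textnormal{Res}(k)$ at once.
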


The idea is to employ a reflection, but instead of the reflection principle for $\textnormal{Res}(k)$, which would correspond to the hypothesis of Pudl\'ak's proposition above, we work with the reflection principle for resolution and make it harder by the relativization technique of Dantchev and Riis \cite{Dantchev-Riis2003}. More precisely, we replace in the reflection principle the resolution refutation statement by its $k$-fold relativization. Most of this paper is then concerned with proving length lower bounds on $\textnormal{Res}(k)$ refutations of a version of the $k$-fold relativization of $\textnormal{REF}^F_{\textnormal{Res},t}$ for every unsatisfiable CNF $F$ (Theorem \ref{thm:main_size_lb_for_RkREF^F_st}). This lower bound will be used to prove item \ref{item:lower_bound_on_B_n} above, but since it works for every unsatisfiable $F$, item \ref{item:lower_bound_on_A_n} will be easy to get choosing $F$ to be hard enough for $\textnormal{Res}(k)$. The upper bound, item \ref{item:upper_bound_on_A_n_and_B_n}, generalizes upper bounds for similar formulas \cite{Atserias-Bonet2004, atserias-muller2019-focs, Garlik2019-mfcs}, which all build on an idea from \cite{Pudlak2003}.

To prove Theorem \ref{thm:main_size_lb_for_RkREF^F_st}, the mentioned main lower bound, we develop a switching lemma in the spirit of \cite{segerlind-buss-impagliazzo} but respecting the functional properties of the formula $\textnormal{REF}^F_{\textnormal{Res},t}$. This will come at a cost of worse parameters in the switching lemma and its narrowed applicability in terms of random restrictions it works for.

Our second result is a generalization of conditional non-automatability results for resolution \cite{atserias-muller2019-focs} to the systems $\textnormal{Res}(k)$. Following \cite{Bonet-Pitassi-Raz,Atserias-Bonet2004} and \cite{atserias-muller2019-focs}, we say that a refutation system $P$ is \emph{automatable in time} $T: \mathbb{N} \rightarrow \mathbb{N}$ if there is an algorithm that when given as input an unsatisfiable CNF $F$ of size $r$ outputs a $P$ refutation of $F$ in time $T(r + s_P(F))$, where $s_P(F)$ is the length of a shortest $P$ refutation of $F$. If the function $T$ is a polynomial, then $P$ is simply called \emph{automatable}. A refutation system $P$ is \emph{weakly automatable} if there is a refutation system $Q$, a polynomial $p$, and an algorithm that when given as input an unsatisfiable CNF $F$ of size $r$ outputs a $Q$ refutation of $F$ in time $p(r+s_P(F))$. It is known that feasible interpolation is implied by weak automatability in refutation systems that are closed under restrictions (see Theorem 3 in \cite{Atserias-Bonet2004}).

First negative automatability results were obtained by Kraj\'{i}\v{c}ek and Pudl\'{a}k \cite{Krajicek-Pudlak1998} who showed that Extended Frege systems  do not have feasible interpolation assuming that RSA is secure against P/poly. Bonet et al. \cite{Bonet-Pitassi-Raz,Bonet-Domingo-Gavalda-Maciel-Pitassi-Raz2004} showed that Frege systems and constant-depth Frege systems do not have feasible interpolation assuming the Diffie-Hellman key exchange procedure is secure against polynomial and subexponential size circuits, respectively. All these proof systems are closed under restrictions, hence these results conditionally rule out weak automatability and automatability. 
As for resolution, before a recent breakthrough by Atserias and M\"uller \cite{atserias-muller2019-focs} who showed that resolution is not automatable unless P = NP, it was known by a result of Alekhnovich and Razborov \cite{Alekhnovich-Razborov2008} that resolution is not automatable unless W[P] = FPT. Here W[P] is the class of parametrized problems that are fixed-parameter reducible to the problem of deciding if a monotone circuit $C$ has a satisfying assignment of Hamming weight $k$. We refer an interested reader to the introduction section of \cite{atserias-muller2019-focs} for more on the history of the automatability problem.

Let QP denote the class of problems decidable in quasi-polynomial time $2^{(\log n)^{O(1)}}$, and let SUBEXP denote the class of problems decidable in subexponential time $2^{n^{o(1)}}$. We show the following theorem, which was proved for $k=1$ in \cite{atserias-muller2019-focs}.

\begin{theorem}
\phantomsection\label{thm:res(k)-not_automatable} %
\begin{enumerate}[1.]
\item If $\textnormal{NP} \not \subseteq \textnormal{P}$ then for every integer $k \geq 1$, $\textnormal{Res}(k)$ is not automatable in polynomial time.
\item If $\textnormal{NP} \not \subseteq \textnormal{QP}$ then for every integer $k \geq 1$, $\textnormal{Res}(k)$ is not automatable in quasi-polynomial time.
\item If $\textnormal{NP} \not \subseteq \textnormal{SUBEXP}$ then for every integer $k \geq 1$, $\textnormal{Res}(k)$ is not automatable in subexponential time.
\end{enumerate}
\end{theorem}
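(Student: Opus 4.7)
The plan is to adapt the Atserias--M{\"u}ller reduction from $\textnormal{SAT}$ to resolution automatability by substituting their hard conjunct (the plain resolution refutation statement) with its $k$-fold Dantchev--Riis relativization, whose $\textnormal{Res}(k)$ hardness is provided by Theorem \ref{thm:main_size_lb_for_RkREF^F_st}. Concretely, given a $3$-CNF $F$ of size $r$ and an integer $k \geq 1$, I would construct in time $r^{O(k)}$ a CNF
\[
H_{F,k} \;:=\; \textnormal{SAT}^F \land R_k\, \textnormal{REF}^F_{\textnormal{Res}, t(r,k)},
\]
where $R_k$ denotes the $k$-fold relativization and $t(r,k)$ is a polynomial chosen in line with the one controlling $B_{n,k}$ in Theorem \ref{thm:res(k)-no-wfdp}; in particular $|H_{F,k}| = r^{O(k)}$.

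The crux is a dichotomy for $H_{F,k}$. If $F$ is satisfiable, then $H_{F,k}$ admits a $\textnormal{Res}(2)$ refutation (hence a $\textnormal{Res}(k)$ refutation) of size $r^{O(k)}$; this is obtained by replaying the construction behind item \ref{item:upper_bound_on_A_n_and_B_n} of Theorem \ref{thm:res(k)-no-wfdp}, in which the variables $\overline{z}$ encoding a satisfying assignment of $F$ are threaded through the variables $\overline{y}$ of any alleged refutation to derive the empty clause. The relativization enlarges the refutation statement only polynomially, so this simulation survives. If $F$ is unsatisfiable, then every $\textnormal{Res}(k)$ refutation of $H_{F,k}$ has size $2^{r^{\Omega(1)}}$: here Theorem \ref{thm:main_size_lb_for_RkREF^F_st} applied with $F$ as the underlying unsatisfiable CNF rules out small $\textnormal{Res}(k)$ refutations of the relativized refutation statement, and the same argument that yields item \ref{item:lower_bound_on_B_n} of Theorem \ref{thm:res(k)-no-wfdp} extends the bound, up to a polynomial factor, to the conjunction $H_{F,k}$.

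Given the dichotomy, the three non-automatability statements follow by a routine padding-and-truncation reduction. Assume $\textnormal{Res}(k)$ is automatable in time $T$; on input $F$, build $H_{F,k}$ and run the automating algorithm for $T(r^{O(k)})$ steps, reporting ``$F$ satisfiable'' if it halts with a refutation and ``$F$ unsatisfiable'' otherwise. Correctness requires $T(r^{O(k)}) < 2^{r^{\Omega(1)}}$ for all sufficiently large $r$, which for fixed $k$ holds when $T$ is polynomial ($T(r^{O(k)})$ is polynomial in $r$), quasi-polynomial ($2^{(O(k) \log r)^{O(1)}}$ is still quasi-polynomial in $r$), or subexponential ($2^{(r^{O(k)})^{o(1)}} = 2^{r^{o(1)}}$ is still dominated by $2^{r^{\Omega(1)}}$). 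This puts $\textnormal{SAT}$ into $\textnormal{P}$, $\textnormal{QP}$, or $\textnormal{SUBEXP}$, respectively, contradicting the three hypotheses.

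The main obstacle is to make the separation in the dichotomy tight. On the upper-bound side, the Atserias--M{\"u}ller simulation must be re-examined with the relativized refutation statement as the second conjunct: although the relativization is designed to preserve short $\textnormal{Res}(2)$ refutations whenever a satisfying assignment of $F$ is available, the bookkeeping for the extra guard literals must stay inside $\textnormal{Res}(2)$ and not balloon the refutation size beyond $r^{O(k)}$. On the lower-bound side, $t(r,k)$ must be calibrated so that Theorem \ref{thm:main_size_lb_for_RkREF^F_st} actually bites while $|H_{F,k}|$ remains polynomial in $r$, and one must verify that extracting from a refutation of $H_{F,k}$ an object to which Theorem \ref{thm:main_size_lb_for_RkREF^F_st} can be applied incurs only polynomial overhead---a point that matters most in the subexponential regime, where the gap between $T(r^{O(k)})$ and $2^{r^{\Omega(1)}}$ is tightest.
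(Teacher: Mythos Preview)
Your reduction has the right shape but targets the wrong formula, and this breaks the dichotomy. The conjunction $H_{F,k} = \textnormal{SAT}^F \land R_k\,\textnormal{REF}^F_{\textnormal{Res},t}$ is (a restriction of) the negated reflection principle, and Theorem~\ref{thm:refl-princ-Upper-bound} gives it a $\textnormal{Res}(2)$ refutation of size $r^{O(k)}$ \emph{unconditionally}---for every $F$, satisfiable or not. So when $F$ is unsatisfiable, $H_{F,k}$ is still easy, and the automating algorithm will find a short refutation in both cases. Your claim that ``the same argument that yields item~\ref{item:lower_bound_on_B_n} \ldots\ extends the bound \ldots\ to the conjunction $H_{F,k}$'' has the monotonicity backwards: adding the extra conjunct $\textnormal{SAT}^F$ supplies more axioms, which can only shorten refutations. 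Indeed, item~\ref{item:upper_bound_on_A_n_and_B_n} of Theorem~\ref{thm:res(k)-no-wfdp} is precisely the statement that this conjunction is easy even though each conjunct separately is hard; that is the content of the failure of weak feasible disjunction.

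The fix, and what the paper does, is to feed the automating algorithm only the second conjunct $\textnormal{R}^k\textnormal{REF}^{F}_{s,t}$. The upper bound for satisfiable $F$ is then obtained not directly from Theorem~\ref{thm:refl-princ-Upper-bound} but by restricting its refutation: substitute $\gamma_F$, then the substitution $\tau$ of Proposition~\ref{propos:subst-to-SAT}, and finally a satisfying assignment $\nu$ of $F$. These restrictions kill the $\textnormal{SAT}$ conjunct entirely, leaving a short $\textnormal{Res}(2)$ refutation of $\textnormal{R}^k\textnormal{REF}^{F}_{s,t}$ alone. For unsatisfiable $F$, Theorem~\ref{thm:main_size_lb_for_RkREF^F_st} applies directly to $\textnormal{R}^k\textnormal{REF}^{F}_{s,t}$ and gives the required $2^{r^{\Omega(1)}}$ lower bound. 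With this correction your padding-and-truncation wrap-up goes through as written.
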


The basic idea of the proof is the same as in \cite{atserias-muller2019-focs}: to map every formula $F$ to a resolution refutation statement for $F$, and show that if $F$ is satisfiable then the refutation statement has a polynomial-length $\textnormal{Res}(k)$ refutation, and if $F$ is unsatisfiable then the refutation statement requires long $\textnormal{Res}(k)$ refutations. An automating algorithm that finds short refutations quickly enough can then be used to distinguishing between the two situations, and hence to solve SAT. We thus need to show strong lower bounds on the length of $\textnormal{Res}(k)$ refutations of a version of resolution refutation statements. For this we use the already discussed Theorem \ref{thm:main_size_lb_for_RkREF^F_st} once more.

\section{Preliminaries}
\label{sec:preliminaries}
For an integer $s$, the set $\{1, \ldots, s\}$ is denoted by $[s]$. 
We write $\dom(\sigma), \im(\sigma)$ for the domain and image of a function $\sigma$.
Two functions $\sigma, \tau$ are \emph{compatible} if $\sigma \cup \tau$ is a function.
If $x$ is a propositional variable, the \emph{positive literal of} $x$, denoted by $x^1$, is $x$, and the \emph{negative literal of} $x$, denoted by $x^0$, is $\neg x$. A \emph{clause} is a set of literals. A clause is written as a disjunction of its elements. A \emph{term} is a set of literals, and is written as a conjunction of the literals. A \emph{CNF} is a set of clauses, written as a conjunction of the clauses. A $k$-\emph{CNF} is a CNF whose every clause has at most $k$ literals. A \emph{DNF} is a set of terms, written as a disjunction of the terms. A $k$-\emph{DNF} is a DNF whose every term has at most $k$ literals. We will identify 1-DNFs with clauses. 
A clause is \emph{non-tautological} if it does not contain both the positive and negative literal of the same variable. A clause $C$ is a \emph{weakening} of a clause $D$ if $D \subseteq C$. A clause $D$ is the \emph{resolvent of} clauses $C_1$ and $C_2$ \emph{on} a variable $x$ if $x \in C_1, \neg x \in C_2$ and $D = (C_1\setminus \{x\}) \cup (C_2 \setminus \{\neg x\})$. If $E$ is a weakening of the resolvent of $C_1$ and $C_2$ on $x$, we say that $E$ is obtained by the \emph{resolution rule} from $C_1$ and $C_2$, and we call $C_1$ and $C_2$ the \emph{premises} of the rule.

Let $F$ be a CNF and $C$ a clause. A \emph{resolution derivation of} $C$ \emph{from} $F$ is a sequence of clauses $\Pi = (C_1, \ldots , C_s)$ such that $C_s = C$ and for all $u \in [s]$, $C_u$ is a weakening of a clause in $F$, or there are $v,w \in [u-1]$ such that $C_u$ is obtained by the resolution rule from $C_v$ and $C_w$. 
A \emph{resolution refutation of} $F$ is a resolution derivation of the empty clause from $F$.
The \emph{length} of a resolution derivation $\Pi = (C_1, \ldots , C_s)$ is $s$. 
For $u \in [s]$, the \emph{height of} $u$ \emph{in} $\Pi$ is the maximum $h$ such that there is a subsequence $(C_{u_1}, \ldots , C_{u_h})$ of $\Pi$ in which $u_h = u$ and for each $i \in [h-1]$, $C_{u_i}$ is a premise of a resolution rule by which $C_{u_{i+1}}$ is obtained in $\Pi$. 
The \emph{height} of $\Pi$ is the maximum height of $u$ in $\Pi$ for $u \in [s]$. 

A \emph{partial assignment} to the variables $x_1, \ldots , x_n$ is a partial map from $\{ x_1, \ldots , x_n \}$ to $\{0,1\}$. Let $\sigma$ be a partial assignment. The CNF $F \! \restriction \! \sigma$ is formed from $F$ by removing every clause containing a literal satisfied by $\sigma$, and removing every literal falsified by $\sigma$ from the remaining clauses. If $\Pi = (C_1, \ldots , C_s)$ is a sequence of clauses, $\Pi \!\restriction \! \sigma$ is formed from $\Pi$ by the same operations. Note that if $\Pi$ is a resolution refutation of $F$, then $\Pi \!\restriction \! \sigma$ is a resolution refutation of $F \!\restriction \! \sigma$. 

The $\text{Res}(k)$ refutation system is a generalization of resolution introduced by Kraj\'{i}\v{c}ek \cite{Krajicek2001}\footnote{In \cite{Krajicek2001} (see also Chapter 5.7 in \cite{Krajicek2019-book}) more general fragments $\textnormal{R}(f)$ of DNF-resolution are introduced, where $f:\mathbb{N} \rightarrow \mathbb{N}$ is non-decreasing and a refutation $\Pi$ is said to have $\textnormal{R}(f)$-\emph{size} $s$ if its lines are $f(s)$-DNFs and $|\Pi| \leq s$. In the present paper we work with constant functions $f$.}. 
Its lines are $k$-DNFs and it has the following inference rules ($A,B$ are $k$-DNFs, $j \in [k]$, and $l,l_1,\ldots, l_j$ are literals):
\begingroup
\setlength{\tabcolsep}{15pt} %
\renewcommand{\arraystretch}{2.5}  %
\begin{center}
\begin{tabular}{c c}
\AxiomC{$A \lor l_1$}
\AxiomC{$B \lor (l_2 \land \cdots \land l_j)$}
\RightLabel{\,$\land$-introduction}
\BinaryInfC{$A \lor B \lor (l_1 \land \cdots \land l_j)$}
\DisplayProof 
&
\AxiomC{\phantom{A}}
\RightLabel{\,Axiom}
\UnaryInfC{$x \lor \neg x$}
\DisplayProof 
\\
\AxiomC{$A \lor (l_1 \land \cdots \land l_j)$}
\AxiomC{$B \lor \neg l_1 \lor \cdots \lor \neg l_j$}
\RightLabel{\,Cut}
\BinaryInfC{$A \lor B$}
\DisplayProof 
& 
\AxiomC{$A$}
\RightLabel{\,Weakening}
\UnaryInfC{$A \lor B$}
\DisplayProof
\\
\end{tabular}
\end{center}
\endgroup
\noindent Let $F$ be a CNF. A $\textit{Res}(k)$ \emph{derivation from} $F$ is a sequence of $k$-DNFs ($D_1,\ldots, D_s$) so that each $D_i$ either belongs to $F$ or follows from the preceding lines by an application of one of the inference rules. A $\textit{Res}(k)$ \emph{refutation of} $F$ is a $\text{Res}(k)$ derivation from $F$ whose final line is the empty clause. The \emph{length} of a $\textnormal{Res}(k)$ derivation $\Pi = (D_1,\ldots, D_s)$, denoted by $|\Pi|$, is $s$. The \emph{size} of $\Pi$, denoted by $\textnormal{size}(\Pi)$, is the number of symbols in it.

\section{Resolution Refutations of $\textit{s}$ Levels of $\textit{t}$ Clauses}
\label{sec:res-refut-of-s-t-definition}
Like in \cite{Garlik2019-mfcs}, it will be convenient to work with a variant of resolution in  which the clauses forming a refutation are arranged in layers. All the definitions in this section are taken from \cite{Garlik2019-mfcs}.

\begin{definition}
 Let $F$ be a CNF of $r$ clauses in $n$ variables $x_1, \ldots, x_n$.  We say that $F$ has a \emph{resolution refutation of $s$ levels of $t$ clauses} if there is a sequence of clauses $C_{i,j}$ indexed by all pairs $(i,j) \in [s] \times [t]$, such that each clause $C_{1,j}$ on the first level is a weakening of a clause in $F$, each clause $C_{i,j}$ on level $i \! \in \! [s] \! \setminus \! \{1\}$ is a weakening of the resolvent of two clauses from level $i-1$ on a variable, and the clause $C_{s,t}$ is empty. 
\end{definition}

The following proposition says that insisting that the clauses are arranged in layers is not a very limiting requirement since this system quadratically simulates resolution and preserves the refutation height.

\begin{proposition}[\cite{Garlik2019-mfcs}]
\label{proposition:simulation-of-Res}
If a $(n-1)$-CNF $F$ in $n$ variables has a resolution refutation of height $h$ and length $s$, then $F$ has a resolution refutation of $h$ levels of $3s$ clauses. 
\end{proposition}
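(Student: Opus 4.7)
The plan is to take a resolution refutation $\Pi = (D_1, \ldots, D_s)$ of $F$ of length $s$ and height $h$, and arrange its clauses into an $h \times 3s$ grid $(C_{i,j})$ meeting the layered refutation requirements. I organize the grid into $s$ \emph{tracks} of three slots each, reserving one track per clause of $\Pi$.

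As a preparatory step I would \emph{normalize} $\Pi$ so that no proper weakenings occur: each $D_u$ is either exactly an $F$-clause (at height $1$) or exactly the resolvent of two earlier clauses $D_v, D_w$ on some variable. Walking through $\Pi$ from the start, each weakened clause is replaced by the underlying $F$-clause or by the exact resolvent; any future clause that was a weakening of a resolvent involving the old $D_u$ is still a weakening of the corresponding (possibly smaller) resolvent involving the new $D_u$, so the length and all heights are preserved. Combined with the hypothesis that $F$ is an $(n-1)$-CNF, this guarantees every $D_u$ has at most $n-1$ literals, since the resolvent of two non-tautological clauses on a variable $x$ lives on the remaining $n-1$ variables. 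Hence for every $u \in [s]$ there is a spare variable $x_u \notin D_u$. I also assume WLOG that $D_s = \emptyset$ and $h_s = h$, and fix one $F$-clause $F^*$ together with a spare variable $y \notin F^*$.

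For track $u \in [s]$ at level $i \in [h]$, the three slots hold the triple $(D_u, D_u \vee x_u, D_u \vee \neg x_u)$ when $i \geq h_u$, and the fixed dummy triple $(F^*, F^* \vee y, F^* \vee \neg y)$ when $i < h_u$. I place track $s$ last at every level and put $\emptyset$ in its rightmost slot, so $C_{h,3s} = \emptyset$ (the other two slots of track $s$ at level $h$ are then $\{x_s\}$ and $\{\neg x_s\}$). The key \emph{propagation trick} is that whenever a track holds a triple $(C, C \vee z, C \vee \neg z)$ at level $i$, every clause in the same triple at level $i+1$ is a weakening of the resolvent of $C \vee z$ and $C \vee \neg z$ on $z$, which equals $C$; this handles the ongoing non-dummy tracks (levels $i > h_u$) together with all the dummy tracks. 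At level $1$ the three components of a triple are weakenings of the $F$-clause appearing as their first component, namely $D_u$ when $h_u = 1$ and $F^*$ otherwise. At the \emph{birth} level $i = h_u \geq 2$, the three clauses in track $u$ are each derived as a weakening of the resolvent of $D_v$ and $D_w$, the two premises of $D_u$ in $\Pi$; both $D_v$ and $D_w$ sit in the first slots of tracks $v$ and $w$ at level $h_u - 1$, because $h_v, h_w < h_u$ means tracks $v$ and $w$ are already alive there.

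The main obstacle is the normalization together with the width accounting: verifying that eliminating weakenings preserves heights and length, that the resulting narrower clauses always admit a spare variable $x_u \notin D_u$, and that the dummy triple is a valid weakening-of-resolvent at every level. Once these are in place the remaining verification reduces to a routine case analysis over the three regimes $i < h_u$, $i = h_u$, and $i > h_u$.
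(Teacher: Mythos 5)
The track-of-three construction you describe (one track per line of $\Pi$, the triple $(C,\, C\vee z,\, C\vee\neg z)$ so that $C$ can be re-derived at the next level, and a dummy $F$-clause triple for tracks not yet born) is indeed the idea behind the factor $3$, and it is essentially the argument given in the cited source.

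There is, however, a genuine gap in your normalization step. You claim that after replacing $D_u$ by the underlying $F$-clause or by the exact resolvent, any later clause that was a weakening of a resolvent involving the old $D_u$ is ``still a weakening of the corresponding (possibly smaller) resolvent involving the new $D_u$.'' This is false precisely when the resolved variable $y$ is one of the literals you strip off: if $y\in D_u\setminus D_u'$, then $D_u'$ no longer contains $y$, so there is no ``corresponding resolvent'' of $D_u'$ with the other premise on $y$. In that case the later clause is merely a weakening of $D_u'$ itself, a degenerate step that is not of the form ``weakening of a resolvent of two clauses on the previous level,'' which is the only inference a layered refutation permits at levels $\ge 2$. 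So the normalized sequence need not be a resolution derivation at all, and the birth-level argument (``both $D_v$ and $D_w$ sit in the first slots of tracks $v$ and $w$ at level $h_u-1$'') has nothing to resolve on. Relatedly, ``every $D_u$ has at most $n-1$ literals'' is not quite what the argument gives — what one gets is that $D_u'$ \emph{mentions} at most $n-1$ variables, which is what is actually needed for a spare $x_u$ to exist, but is also exactly what becomes unclear once the normalization is in question.

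Two repairs are available. (i) Keep your normalization but allow the degenerate case $D_u'=D_v'$; at the birth level of track $u$, derive the triple by resolving the two auxiliary slots $D_v'\vee x_v$ and $D_v'\vee\neg x_v$ of track $v$ on the level below, i.e.\ reuse the propagation trick to implement the copy. (ii) Drop the global normalization entirely: place the original (possibly weakened) $D_u$ in the first slot of track $u$ and $D_u^\circ\vee x_u$, $D_u^\circ\vee\neg x_u$ in the other two, where $D_u^\circ\subseteq D_u$ is the underlying $F$-clause or exact resolvent of the original premises $D_v,D_w$. Then $D_u^\circ$ always omits a variable $x_u$, the two auxiliary slots resolve on $x_u$ to $D_u^\circ$ of which all three entries on the next level are weakenings, and at a birth level the genuine premises $D_v,D_w$ (not normalized proxies) are available in the first slots of tracks $v,w$. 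With either fix the rest of your case analysis goes through, but as written the ``possibly smaller resolvent'' claim does not hold.
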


We now formalize refutation statements for this system in the same way as in \cite{Garlik2019-mfcs}. 
Let $n,r,s,t$ be integers. Let $F$ be a CNF consisting of $r$ clauses $C_1, \ldots , C_r$ in $n$ variables $x_1, \ldots , x_n$. We define a propositional formula $\textnormal{REF}^{F}_{s,t}$ expressing that $F$ has a resolution refutation of $s$ levels of $t$ clauses.

We first list the variables of $\textnormal{REF}^{F}_{s,t}$. $D$-\emph{variables} $D(i,j,\ell,b)$, $i \in [s]$, $j \in [t], \ell \in [n], b \in \{0,1\}$, encode clauses $C_{i,j}$ as follows: $D(i,j,\ell,1)$ (resp. $D(i,j,\ell,0)$) means that the literal $x_\ell$ (resp. $\neg x_\ell$) is in $C_{i,j}$.  
$L$-\emph{variables} $L(i,j,j')$ (resp. $R$-\emph{variables} $R(i,j,j')$), $i \! \in \! [s] \! \setminus \! \{1\}, j,j' \in [t]$, say that $C_{i-1,j'}$ is a premise of the resolution rule by which $C_{i,j}$ is obtained, and it is the premise containing the positive (resp. negative) literal of the resolved variable. 
$V$-\emph{variables} $V(i,j,\ell)$, $i \! \in \! [s] \! \setminus \! \{1\}, j \in [t], \ell \in [n]$, say that $C_{i,j}$ is obtained by resolving on $x_\ell$.
$I$-\emph{variables} $I(j,m)$, $j \in [t], m \in [r]$, say that $C_{1,j}$ is a weakening of $C_m$.

$\textnormal{REF}^{F}_{s,t}$ is the union of the following fifteen sets of clauses:
\begin{alignat}{2}
\label{refclause:axioms} & \neg I(j,m) \lor D(1,j,\ell,b)   & j \! \in \! [t], m \! \in \! [r], b \! \in \! \{0,1\}, x_\ell^b \! \in \! C_m,
\intertext{clause $C_{1,j}$  contains the literals of $C_m$ assigned to it by $I(j,m)$,}
\label{refclause:nontaut} & \neg D(i,j,\ell,1) \lor \neg D(i,j,\ell,0)  & i \! \in \! [s], j \! \in \! [t], \ell \! \in \! [n],
\intertext{no clause $C_{i,j}$ contains $x_\ell$ and $\neg x_\ell$ at the same time,}
\label{refclause:res-L-cut} & \neg L(i,j,j') \lor \neg V(i,j,\ell) \lor D(i-1,j',\ell,1) & i \! \in \! [s] \! \setminus \! \{1\}, j,j' \! \in \! [t], \ell \! \in \! [n], 
\\
\label{refclause:res-R-cut} & \neg R(i,j,j') \lor \neg V(i,j,\ell) \lor D(i-1,j',\ell,0) & i \! \in \! [s] \! \setminus \! \{1\}, j,j' \! \in \! [t], \ell \! \in \! [n],
\intertext{clause $C_{i-1,j'}$ used as the premise given by $L(i,j,j')$ (resp. $R(i,j,j')$) in resolving on $x_\ell$ must contain $x_\ell$ (resp. $\neg x_\ell$),}
\begin{split}
\label{refclause:res-L-transf} \mathrlap{ \neg L(i,j,j') \lor \neg V(i,j,\ell) \lor \neg D(i-1,j',\ell',b) \lor D(i,j,\ell',b)} \\
\mathrlap{ \, \, \; \; \; \qquad \qquad \qquad \qquad  i \! \in \! [s] \! \setminus \! \{1\}, j,j' \! \in \! [t], \ell,\ell' \! \in \! [n], b \! \in \! \{0,1\}, (\ell',b) \neq (\ell,1),} 
\end{split}
\\[1ex]
\begin{split}
\label{refclause:res-R-transf} \mathrlap{ \neg R(i,j,j') \lor \neg V(i,j,\ell) \lor \neg D(i-1,j',\ell',b) \lor D(i,j,\ell',b)} \\
 \mathrlap{ \, \, \; \; \; \qquad \qquad \qquad \qquad  i \! \in \! [s] \! \setminus \! \{1\}, j,j' \! \in \! [t], \ell,\ell' \! \in \! [n], b \! \in \! \{0,1\}, (\ell',b) \neq (\ell,0),}
\end{split}
\intertext{clause $C_{i,j}$ derived by resolving on $x_\ell$ must contain each literal different from $x_\ell$ (resp. $\neg x_\ell$) from the premise given by $L(i,j,j')$ (resp. $R(i,j,j')$),} 
\label{refclause:empty-clause} & \neg D(s,t,\ell,b) & \ell \! \in \! [n], b \! \in \! \{0,1\}, 
\intertext{clause $C_{s,t}$ is empty,}
\label{refclause:V-dom} & V(i,j,1) \lor V(i,j,2) \lor \ldots \lor V(i,j,n)  & i \! \in \! [s] \! \setminus \! \{1\}, j\! \in \! [t], \\
\label{refclause:I-dom} & I(j,1) \lor I(j,2) \lor \ldots \lor I(j,r)  & j\! \in \! [t], \\
\label{refclause:L-dom} & L(i,j,1) \lor L(i,j,2) \lor \ldots \lor L(i,j,t)  & i \! \in \! [s] \! \setminus \! \{1\}, j\! \in \! [t], \\
\label{refclause:R-dom} & R(i,j,1) \lor R(i,j,2) \lor \ldots \lor R(i,j,t)  & i \! \in \! [s] \! \setminus \! \{1\}, j\! \in \! [t], \\
\label{refclause:V-func} & \neg V(i,j,\ell) \lor \neg V(i,j,\ell')  & i \! \in \! [s] \! \setminus \! \{1\}, j\! \in \! [t], \ell,\ell' \! \in \! [n], \ell \neq \ell',\\
\label{refclause:I-func} & \neg I(j,m) \lor \neg I(j,m')  &  j\! \in \! [t], m,m' \! \in \! [r], m \neq m',\\
\label{refclause:L-func} & \neg L(i,j,j') \lor \neg L(i,j,j'')  & i \! \in \! [s] \! \setminus \! \{1\}, j,j',j'' \! \in \! [t], j' \neq j'',\\
\label{refclause:R-func} & \neg R(i,j,j') \lor \neg R(i,j,j'')  & i \! \in \! [s] \! \setminus \! \{1\}, j,j',j'' \! \in \! [t], j' \neq j'', 
\end{alignat}
the $V,I,L,R$-variables define functions with the required domains and ranges.

\begin{definition}
\label{def:home-pair-and-set-to}
For $i \in [s], j,j' \in [t], \ell \in [n], b \in \{0,1\}, m \in [r]$, we say that $(i,j)$ is the \emph{home pair} of the variable $D(i,j,\ell,b)$, of the variables $R(i,j,j')$, $L(i,j,j')$, $V(i,j,\ell)$ if $i \neq 1$, and of the variable $I(j,m)$ if $i=1$.

We write $V(i,j,\cdot)$ to stand for the set $\{V(i,j,\ell) : \ell \in [n] \}$. Similarly, we write $I(j,\cdot), L(i,j,\cdot)$, and $R(i,j,\cdot)$ to stand for the set $\{I(j,m) : m \in [r]\}$, $\{L(i,j,j') : j' \in [t]\}$, and $\{R(i,j,j') : j' \in [t]\}$, respectively. We denote by $D(i,j,\cdot,\cdot)$ the set $\{D(i,j,\ell,b) : \ell \in [n], b \in \{0,1\}\}$. 

Let $\sigma$ be a partial assignment. We say that $V(i,j,\cdot)$ is \emph{set to} $\ell$ \emph{by} $\sigma$ if $\sigma(V(i,j,\ell)) = 1$ and for all $\ell' \in [n] \! \setminus \{\ell\}$, $\sigma(V(i,j,\ell')) = 0$ . Similarly, we say that $I(j,\cdot)$ is \emph{set to} $m$ \emph{by} $\sigma$ if $\sigma(I(j,m)) = 1$ and for all $m' \in [r]\! \setminus \! \{m\}$ we have $\sigma(I(j,m')) = 0$.
We say that $L(i,j,\cdot)$ (resp. $R(i,j,\cdot)$)   is \emph{set to} $j'$ \emph{by} $\sigma$ if $\sigma(L(i,j,j')) = 1$ (resp. $\sigma(R(i,j,j')) = 1$) and for all $j'' \in [t] \! \setminus \! \{j'\}$,  we have $\sigma(L(i,j,j'')) = 0$ (resp. $\sigma(R(i,j,j'')) = 0$). 
We say that $D(i,j,\cdot, \cdot)$ is \emph{set to} a clause $C_{i,j}$ \emph{by} $\sigma$ if for all $\ell \in [n], b \in \{0,1\}$ we have $\sigma(D(i,j,\ell,b)) = 1$ if $x_\ell^b \in C_{i,j}$ and $\sigma(D(i,j,\ell,b)) = 0$ if $x_\ell^b \not \in C_{i,j}$. 

For $Y \in \{D(i,j,\cdot,\cdot), V(i,j,\cdot), I(j,\cdot), R(i,j,\cdot), L(i,j,\cdot)\}$, we say that $Y$ is \emph{set by} $\sigma$ if $Y$ is set to $v$ by $\sigma$ for some value $v$. We will often omit saying ``by $\sigma$'' if $\sigma$ is clear from the context.
\end{definition}

\section{Reflection Principle for Resolution}
\label{sec:refl-princ-for-res}
We repeat the formulation of a version of the reflection principle from \cite{Garlik2019-mfcs}.
We express the negation of the reflection principle for resolution by a CNF in the form of a conjunction $\textnormal{SAT}^{n,r} \land \textnormal{REF}^{n,r}_{s,t}$. The only shared variables by the formulas $\textnormal{SAT}^{n,r}$ and $\textnormal{REF}^{n,r}_{s,t}$ encode a CNF with $r$ clauses in $n$ variables. The meaning of $\textnormal{SAT}^{n,r}$ is that the encoded CNF is satisfiable, while the meaning of $\textnormal{REF}^{n,r}_{s,t}$ is that the same CNF has a resolution refutation of $s$ levels of $t$ clauses. A formal definition is given next.

Formula $\textnormal{SAT}^{n,r}$ has the following variables. $C$-\emph{variables} $C(m,\ell,b)$, $m \in [r], \ell \in [n], b \in \{0,1\},$ encode clauses $C_m$ as follows: $C(m,\ell,1)$ (resp. $C(m,\ell,0)$) means that the literal $x_\ell$ (resp. $\neg x_\ell$) is in $C_m$. 
$T$-\emph{variables} $T(\ell)$, $\ell \in [n]$, and $T(m,\ell,b)$, $m \in [r], \ell \in [n], b \in \{0,1\}$, encode that an assignment to variables $x_1,\ldots, x_n$ satisfies the CNF $\{C_1,\ldots, C_r\}$. The meaning of $T(\ell)$ is that the literal $x_\ell$ is satisfied by the assignment. The meaning of $T(m,\ell,1)$ (resp. $T(m,\ell,0)$) is that clause $C_m$ is satisfied through the literal $x_\ell$ (resp. $\neg x_\ell$).

We list the clauses of $\textnormal{SAT}^{n,r}$:
\begin{align}
\label{satclause:at-least-one-lit-sat} & T(m,1,1) \lor T(m,1,0) \lor \ldots \lor T(m,n,1) \lor T(m,n,0)& m \in [r],\\
\label{satclause:sat-by-positive} & \neg T(m,\ell,1) \lor T(\ell) & m \in [r], \ell \in [n],\\
\label{satclause:sat-by-negative} & \neg T(m,\ell,0) \lor \neg T(\ell) & m \in [r], \ell \in [n],\\
\label{satclause:sat-lit-in-clause} & \neg T(m,\ell,b) \lor C(m,\ell,b) & m \in [r], \ell \in [n], b \in \{0,1\}.
\end{align}
The meaning of \eqref{satclause:at-least-one-lit-sat} is that clause $C_m$ is satisfied through at least one literal.
Clauses \eqref{satclause:sat-by-positive} and \eqref{satclause:sat-by-negative} say that if $C_m$ is satisfied through a literal, then the literal is satisfied.
The meaning of \eqref{satclause:sat-lit-in-clause} is that if $C_m$ is satisfied through a literal, then it contains the literal. 

Variables of $\textnormal{REF}^{n,r}_{s,t}$ are the variables $C(m,\ell,b)$ of $\textnormal{SAT}^{n,r}$ together with all the variables of $\textnormal{REF}^F_{s,t}$ for some (and every) $F$ of $r$ clauses in $n$ variables. That is, $\textnormal{REF}^{n,r}_{s,t}$ has the following variables:
$C(m,\ell,b)$ for $m \in [r], \ell \in [n], b \in \{0,1\}$;
$D(i,j,\ell,b)$ for $i \in [s], j \in [t], \ell \in [n], b \in \{0,1\}$; 
$R(i,j,j')$ and $L(i,j,j')$ for $i \! \in \! [s] \! \setminus \! \{1\}, j,j' \in [t]$; $V(i,j,\ell)$ for $i \! \in \! [s] \! \setminus \! \{1\}, j \in [t], \ell \in [n]$; $I(j,m)$ for $j \in [t], m \in [r]$.

The clauses of $\textnormal{REF}^{n,r}_{s,t}$ are 
\eqref{refclause:nontaut} - \eqref{refclause:R-func} of  $\textnormal{REF}^F_{s,t}$ together with the following clauses (to replace clauses \eqref{refclause:axioms}): 
\begin{align}
\label{refclause-n-r:axioms}
& \neg I(j,m) \lor \neg C(m,\ell,b) \lor D(1,j,\ell,b) &  j \in [t], m \in [r], \ell \in [n], b \in \{0,1\},
\end{align}
saying that if clause $C_{1,j}$ is a weakening of clause $C_m$, then the former contains each literal of the latter. So the difference from \eqref{refclause:axioms} is that $C_m$ is no longer a clause of some fixed formula $F$, but it is described by $C$-variables.

\begin{proposition}
\label{propos:subst-to-SAT}
Let $F$ be a CNF with $r$ clauses $C_1,\ldots, C_r$ in $n$ variables $x_1,\ldots, x_n$, and let $\gamma_F$ be an assignment such that its domain are all $C$-variables and $\gamma_F (C(m,\ell,b)) = 1$ if $x_\ell^b \in C_m$ and $\gamma_F (C(m,\ell,b)) = 0$ if $x_\ell^b \notin C_m$. There is a substitution $\tau$ that maps the variables of $\textnormal{SAT}^{n,r} \! \restriction \! \gamma_F$ to $\{0,1\} \cup \{ x_\ell^b : \ell \in [n], b \in \{0,1\} \}$ such that $( \textnormal{SAT}^{n,r} \! \restriction \! \gamma_F ) \! \restriction \! \tau$ is $F$ together with some tautological clauses in the variables $x_1,\ldots, x_n$. 
\end{proposition}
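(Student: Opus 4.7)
The plan is to describe explicitly what $\textnormal{SAT}^{n,r} \! \restriction \! \gamma_F$ looks like, then to write down the substitution $\tau$ that replaces each remaining $T$-variable by either a literal $x_\ell^b$ or a constant, and finally to check clause-by-clause that the result is $F$ augmented with tautologies of the form $x_\ell \lor \neg x_\ell$.

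First I would observe that after applying $\gamma_F$, every $C$-variable disappears and only $T$-variables survive. Each clause of type \eqref{satclause:sat-lit-in-clause} becomes $\neg T(m,\ell,b)$ when $x_\ell^b \notin C_m$ and is satisfied (hence removed) when $x_\ell^b \in C_m$. The clauses of types \eqref{satclause:at-least-one-lit-sat}, \eqref{satclause:sat-by-positive}, \eqref{satclause:sat-by-negative} are unaffected by $\gamma_F$ because they mention no $C$-variables.

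Next I would define
\[
\tau(T(\ell)) \, = \, x_\ell, \qquad \tau(T(m,\ell,b)) \, = \,
\begin{cases} x_\ell^b & \text{if } x_\ell^b \in C_m, \\ 0 & \text{if } x_\ell^b \notin C_m. \end{cases}
\]
The image of $\tau$ is contained in $\{0,1\} \cup \{x_\ell^b : \ell \in [n], b \in \{0,1\}\}$, as required.

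The remaining step is pure verification. Clauses \eqref{satclause:at-least-one-lit-sat} become $\bigvee_{\ell,b} \tau(T(m,\ell,b))$, which after dropping the $0$-literals is exactly $C_m$, so we recover all clauses of $F$. Clauses \eqref{satclause:sat-by-positive} become $\neg \tau(T(m,\ell,1)) \lor x_\ell$: when $x_\ell \in C_m$ this is the tautology $\neg x_\ell \lor x_\ell$, and otherwise it contains the literal $1$ and is deleted. Clauses \eqref{satclause:sat-by-negative} behave symmetrically, giving either the tautology $x_\ell \lor \neg x_\ell$ or a satisfied clause. Clauses \eqref{satclause:sat-lit-in-clause} that survived $\gamma_F$ (those with $x_\ell^b \notin C_m$) become $\neg 0$ and disappear. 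There is no deep obstacle here; the only thing to be careful about is that the \emph{tautological} clauses mentioned in the statement really arise only from \eqref{satclause:sat-by-positive} and \eqref{satclause:sat-by-negative} when the corresponding literal lies in $C_m$, which is immediate from the case analysis above.
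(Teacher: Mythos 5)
Your proposal is correct and matches the paper's proof: the substitution $\tau$ is defined identically (set $\tau(T(m,\ell,b)) = x_\ell^b$ when $x_\ell^b \in C_m$ and $0$ otherwise, and $\tau(T(\ell)) = x_\ell$), and the clause-by-clause verification is the same, differing only in presentation (you characterize $\textnormal{SAT}^{n,r}\!\restriction\!\gamma_F$ first and then verify, while the paper interleaves the definition of $\tau$ with the verification).
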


\begin{proof}
Define $\tau$ as follows. If $\gamma_F(C(m,\ell,b)) = 0$, then $\tau(T(m,\ell,b)) = 0$. This deletes $T(m,\ell,b)$ from \eqref{satclause:at-least-one-lit-sat} and satisfies \eqref{satclause:sat-lit-in-clause} together with either \eqref{satclause:sat-by-positive} (if $b=1$) or \eqref{satclause:sat-by-negative} (if $b=0$). 
If $\gamma_F(C(m,\ell,b)) = 1$, then \eqref{satclause:sat-lit-in-clause} has been satisfied and we define $\tau(T(m,\ell,b)) = x_\ell^b$ and $\tau(T(\ell)) = x_\ell$. This choice turns \eqref{satclause:sat-by-positive} (if $b=1$) or \eqref{satclause:sat-by-negative} (if $b=0$) into a tautological clause and correctly substitutes the remaining literals of \eqref{satclause:at-least-one-lit-sat} to yield the clause $C_m$ of $F$.
\end{proof}

\section{The Upper Bounds}
\label{sec:refl-princ-upper-bound}

In this section we work with a stronger formulation of the negation of the reflection principle for resolution, expressed by a CNF formula $\textnormal{SAT}^{n,r} \land \textnormal{R}^k\textnormal{REF}^{n,r}_{s,t}$. The difference from the previous formulation $\textnormal{SAT}^{n,r} \land \textnormal{REF}^{n,r}_{s,t}$ is that we have replaced $\textnormal{REF}^{n,r}_{s,t}$ by its $k$-fold relativization $\textnormal{R}^k\textnormal{REF}^{n,r}_{s,t}$. The first-order logic notion of relativization of a first-order formula to a relation was put to use in propositional proof complexity by Dantchev and Riis \cite{Dantchev-Riis2003}. 

We first describe the $k$-fold relativization of $\textnormal{REF}^{F}_{s,t}$, denoted by $\textnormal{R}^k\textnormal{REF}^{F}_{s,t}$. The variables of this CNF are those of $\textnormal{REF}^{F}_{s,t}$ together with new variables $S_u(i,j)$, $(i,j) \in [s] \times [t]$, $u \in [k]$. The meaning of $\textnormal{R}^k\textnormal{REF}^{F}_{s,t}$ is that those clauses $C_{i,j}$ (described by $D$-variables) for which $\bigwedge_{u \in [k]}S_u(i,j)$ is satisfied form a resolution refutation of $F$ of $s$ levels of at most $t$ clauses. 
That is, only the selected clauses $C_{i,j}$ have to form a refutation, and nothing is asked of the clauses that are not selected. 
Formally, $\textnormal{R}^k\textnormal{REF}^{F}_{s,t}$ is the union of the following sets of clauses:
\begingroup
\allowdisplaybreaks %
\begin{alignat}{2}
\label{rrefclause:axioms} & \bigvee_{u\in [k]} \neg S_u(1,j) \lor \neg I(j,m) \lor D(1,j,\ell,b)   & j \! \in \! [t], m \! \in \! [r], b \! \in \! \{0,1\}, x_\ell^b \! \in \! C_m, \\ 
\label{rrefclause:nontaut} & \bigvee_{u\in [k]} \neg S_u(i,j)\lor \neg D(i,j,\ell,1) \lor \neg D(i,j,\ell,0)  & i \! \in \! [s], j \! \in \! [t], \ell \! \in \! [n], \\
\begin{split}
\label{rrefclause:res-L-cut} & \mathrlap{ \bigvee_{u\in [k]} \neg S_u(i,j) \lor \neg L(i,j,j') \lor \neg V(i,j,\ell) \lor D(i-1,j',\ell,1) } \\
\mathrlap{\, \, \, \, \, \, \, \, \; \qquad \qquad \qquad \qquad \qquad \qquad \qquad \qquad \qquad \qquad i \! \in \! [s] \!\setminus \! \{1\}, j,j' \! \in \! [t], \ell \! \in \! [n],} 
\end{split} \\[1ex]
\begin{split}
\label{rrefclause:res-R-cut} & \mathrlap{ \bigvee_{u\in [k]} \neg S_u(i,j) \lor \neg R(i,j,j') \lor \neg V(i,j,\ell) \lor D(i-1,j',\ell,0) } \\
\mathrlap{\, \, \, \, \, \, \, \, \; \qquad \qquad \qquad \qquad \qquad \qquad \qquad \qquad \qquad \qquad i \! \in \! [s] \!\setminus \! \{1\}, j,j' \! \in \! [t], \ell \! \in \! [n],}
\end{split} \\[1ex]
\begin{split}
\label{rrefclause:res-L-transf} \mathrlap{ \bigvee_{u\in [k]} \neg S_u(i,j) \lor \neg L(i,j,j') \lor \neg V(i,j,\ell) \lor \neg D(i-1,j',\ell',b) \lor D(i,j,\ell',b)} \\
\mathrlap{  \, \, \, \, \, \, \, \; \; \; \quad \quad \qquad \qquad \qquad  i \! \in \! [s] \! \setminus \! \{1\}, j,j' \! \in \! [t], \ell,\ell' \! \in \! [n], b \! \in \! \{0,1\}, (\ell',b) \neq (\ell,1),} 
\end{split} \\[1ex]
\begin{split}
\label{rrefclause:res-R-transf} \mathrlap{ \bigvee_{u\in [k]} \neg S_u(i,j) \lor \neg R(i,j,j') \lor \neg V(i,j,\ell) \lor \neg D(i-1,j',\ell',b) \lor D(i,j,\ell',b)} \\
 \mathrlap{  \, \, \, \, \, \, \, \; \; \; \quad \quad \qquad \qquad \qquad  i \! \in \! [s] \! \setminus \! \{1\}, j,j' \! \in \! [t], \ell,\ell' \! \in \! [n], b \! \in \! \{0,1\}, (\ell',b) \neq (\ell,0),}
\end{split} \\[1ex]
\label{rrefclause:empty-clause} & \bigvee_{u \in [k]} \neg S_u(s,t) \lor \neg D(s,t,\ell,b) & \ell \! \in \! [n], b \! \in \! \{0,1\}, \\
\label{rrefclause:V-dom} & \bigvee_{u \in [k]} \neg S_u(i,j) \lor \bigvee_{\ell \in [n]} V(i,j,\ell) &  i \! \in \! [s] \! \setminus \! \{1\}, j\! \in \! [t], \\
\label{rrefclause:I-dom} & \bigvee_{u\in [k]} \neg S_u(1,j) \lor \bigvee_{m \in [r]} I(j,m) & j\! \in \! [t], \\
\label{rrefclause:L-dom} & \bigvee_{u\in [k]} \neg S_u(i,j) \lor \bigvee_{j' \in [t]} L(i,j,j')  &  i \! \in \! [s] \! \setminus \! \{1\}, j\! \in \! [t], \\
\label{rrefclause:R-dom} & \bigvee_{u\in [k]} \neg S_u(i,j) \lor \bigvee_{j' \in [t]} R(i,j,j')  &  i \! \in \! [s] \! \setminus \! \{1\}, j\! \in \! [t], \\
\label{rrefclause:V-func} & \bigvee_{u\in [k]} \neg S_u(i,j) \lor \neg V(i,j,\ell) \lor \neg V(i,j,\ell')  &   i \! \in \! [s] \! \setminus \! \{1\}, j\! \in \! [t], \ell,\ell' \! \in \! [n], \ell \neq \ell',\\
\label{rrefclause:I-func} & \bigvee_{u\in [k]} \neg S_u(i,j) \lor \neg I(j,m) \lor \neg I(j,m')  &  j\! \in \! [t], m,m' \! \in \! [r], m \neq m',\\
\label{rrefclause:L-func} & \bigvee_{u\in [k]} \neg S_u(i,j) \lor \neg L(i,j,j') \lor \neg L(i,j,j'')  &   i \! \in \! [s] \! \setminus \! \{1\}, j,j',j'' \! \in \! [t], j' \neq j'',\\
\label{rrefclause:R-func} & \bigvee_{u\in [k]} \neg S_u(i,j) \lor \neg R(i,j,j') \lor \neg R(i,j,j'')  &   i \! \in \! [s] \! \setminus \! \{1\}, j,j',j'' \! \in \! [t], j' \neq j'', \\
\label{rrefclause:Sst-true} & S_u(s,t) & u \in [k], \\
\label{rrefclause:L-unused-to-zero} & \bigvee_{u\in [k]} \neg S_u(i,j) \lor \neg L(i,j,j') \lor S_{u'}(i-1,j') & i \! \in \! [s] \! \setminus \! \{1\}, j,j' \in [t], u' \in [k], \\
\label{rrefclause:R-unused-to-zero} & \bigvee_{u\in [k]} \neg S_u(i,j) \lor \neg R(i,j,j') \lor S_{u'}(i-1,j') & i \! \in \! [s] \! \setminus \! \{1\}, j,j' \in [t], u' \in [k]. 
\end{alignat}
\endgroup

Clauses in \eqref{rrefclause:axioms} - \eqref{rrefclause:R-func} are just the clauses in \eqref{refclause:axioms} - \eqref{refclause:R-func} with the additional disjuncts $\bigvee_{u \in [k]} \neg S_u(i,j)$ with the corresponding $(i,j)$. Clauses \eqref{rrefclause:Sst-true} together with \eqref{rrefclause:empty-clause} make sure that $C_{s,t}$ is empty. Clauses in \eqref{rrefclause:L-unused-to-zero} and \eqref{rrefclause:R-unused-to-zero} ensure that if $C_{i-1,j'}$ is not selected then it cannot be used as a premise.

It is immediate that the partial assignment that maps $S_u(i,j)$ to 1 for all $(i,j) \in [s] \times [t]$ and all $u \in [k]$ maps $\textnormal{R}^k\textnormal{REF}^{F}_{s,t}$ to $\textnormal{REF}^{F}_{s,t}$.

We now define the formula $\textnormal{R}^k\textnormal{REF}^{n,r}_{s,t}$ by a change to $\textnormal{R}^k\textnormal{REF}^{F}_{s,t}$ analogous to the change by which we obtained $\textnormal{REF}^{n,r}_{s,t}$ from $\textnormal{REF}^{F}_{s,t}$. That is, the clauses of $\textnormal{R}^k\textnormal{REF}^{n,r}_{s,t}$ are \eqref{rrefclause:nontaut} - \eqref{rrefclause:R-unused-to-zero} of $\textnormal{R}^k\textnormal{REF}^{F}_{s,t}$ together with the following clauses (to replace \eqref{rrefclause:axioms}):
\begin{equation}
\begin{split}
\label{rrefclause-n-r:axioms}
\bigvee_{u \in [k]} \neg S_u(1,j) \lor \neg I(j,m) \lor \neg C(m,\ell,b) \lor D(1,j,\ell,b)  \\
  & j \! \in \! [t], m \! \in \! [r], \ell \! \in \! [n], b \! \in \! \{0,1\},
\end{split}
\end{equation}
saying that if clause $C_{1,j}$ is selected and is a weakening of clause $C_m$ (described by $C$-variables), then it contains each literal of $C_m$.

\begin{theorem}
\label{thm:refl-princ-Upper-bound}
The negation of the reflection principle for resolution expressed by the formula $\textnormal{SAT}^{n,r} \land \textnormal{R}^k\textnormal{REF}^{n,r}_{s,t}$ has $\text{Res}(2)$ refutations of size $O(trn^2 + tr^2 + trnk + st^2n^3 + st^2n^2k + st^2nk^2 + st^3n)$.
\end{theorem}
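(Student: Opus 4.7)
The plan is to simulate the obvious semantic argument: the $T$-variables describe a satisfying assignment of the CNF encoded by the $C$-variables, so every selected clause $C_{i,j}$ in the refutation (one with all $S_u(i,j) = 1$) must contain a literal made true by $T$; since \eqref{rrefclause:Sst-true} and \eqref{rrefclause:empty-clause} together force $C_{s,t}$ to be selected and empty, this yields a contradiction. Concretely, for each $(i,j) \in [s] \times [t]$ I derive in $\textnormal{Res}(2)$ the $2$-DNF
\[
\Phi_{i,j} \coloneqq \bigvee_{u \in [k]} \neg S_u(i,j) \lor \bigvee_{\ell \in [n]} (D(i,j,\ell,1) \land T(\ell)) \lor \bigvee_{\ell \in [n]} (D(i,j,\ell,0) \land \neg T(\ell))
\]
by induction on $i$, and then finish by cutting $\Phi_{s,t}$ against \eqref{rrefclause:Sst-true} and \eqref{rrefclause:empty-clause}.

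For the base case, starting from \eqref{satclause:at-least-one-lit-sat} $\bigvee_{\ell,b} T(m,\ell,b)$, for each $(m,\ell,b)$ I first derive, by combining \eqref{satclause:sat-by-positive} or \eqref{satclause:sat-by-negative} with \eqref{satclause:sat-lit-in-clause} and \eqref{rrefclause-n-r:axioms} via a single $\land$-introduction, an auxiliary clause of the form $\bigvee_u \neg S_u(1,j) \lor \neg I(j,m) \lor \neg T(m,\ell,b) \lor (D(1,j,\ell,b) \land T(\ell)^b)$. Cutting \eqref{satclause:at-least-one-lit-sat} successively against the $2n$ auxiliaries for a fixed $m$ gives $\bigvee_u \neg S_u(1,j) \lor \neg I(j,m) \lor \Psi_{1,j}$, where $\Psi_{i,j}$ denotes the body of $\Phi_{i,j}$; iteratively cutting these $r$ lines against \eqref{rrefclause:I-dom} produces $\Phi_{1,j}$ (here set-based deduplication of clauses prevents $\Psi_{1,j}$ from being copied $r$ times). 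Summed over $j$, the contribution is $O(trn^2 + tr^2 + trnk)$.

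For the inductive step $i > 1$, fix $\ell \in [n]$ and $j_L \in [t]$ and transform $\Phi_{i-1,j_L}$ into a $2$-DNF of the shape
\[
\bigvee_u \neg S_u(i,j) \lor \neg L(i,j,j_L) \lor \neg V(i,j,\ell) \lor T(\ell) \lor \bigvee_{(\ell',b') \neq (\ell,1)} (D(i,j,\ell',b') \land T(\ell')^{b'})
\]
by: (a) using \eqref{rrefclause:res-L-transf}, weakened to contain $\neg T(\ell')^{b'}$ and $\land$-introduced with $T(\ell')^{b'}$, to rewrite each $(\ell',b') \neq (\ell,1)$ term; (b) using \eqref{rrefclause:res-L-cut} to reduce the $(\ell,1)$ term to the single literal $T(\ell)$; and (c) using \eqref{rrefclause:L-unused-to-zero} to trade inherited $\neg S_u(i-1,j_L)$ disjuncts for $\neg L(i,j,j_L)$ and $\neg S_u(i,j)$. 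Cutting the resulting $t$ lines against \eqref{rrefclause:L-dom} strips $\neg L(i,j,j_L)$; the symmetric right-hand construction via \eqref{rrefclause:res-R-cut}, \eqref{rrefclause:res-R-transf}, \eqref{rrefclause:R-unused-to-zero}, \eqref{rrefclause:R-dom} yields the dual line with $\neg T(\ell)$; cutting the two on $T(\ell)$ and then iterating over $\ell$ with \eqref{rrefclause:V-dom} produces $\Phi_{i,j}$. Since every intermediate line is a $2$-DNF of length $O(n+k)$, a careful accounting gives the per-$(i,j)$ cost $O(tn^3 + tn^2 k + tnk^2 + t^2 n)$, which sums over $(i,j)$ with $i>1$ to $O(st^2 n^3 + st^2 n^2 k + st^2 nk^2 + st^3 n)$. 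The endgame, two rounds of cuts of $\Phi_{s,t}$ against \eqref{rrefclause:Sst-true} and \eqref{rrefclause:empty-clause} to derive the empty clause, is negligible.

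The delicate part is the inductive step: ensuring that every intermediate line really is a $2$-DNF even though it simultaneously carries the $k$ disjuncts $\neg S_u$ and the $2n$ terms $D \land T^b$, and choreographing the three nested iterations (over $j_L$, $j_R$, and $\ell$) so that the $L$- and $R$-eliminations via \eqref{rrefclause:L-dom} and \eqref{rrefclause:R-dom} are performed \emph{inside} the $\ell$-loop rather than after it. This nesting order is exactly what yields the $st^3 n$ summand rather than a spurious $st^3 n^2$ or $st^4 n$ term; everything else is bookkeeping against the axioms \eqref{rrefclause:nontaut}--\eqref{rrefclause:R-unused-to-zero} of $\textnormal{R}^k\textnormal{REF}^{n,r}_{s,t}$ and \eqref{satclause:at-least-one-lit-sat}--\eqref{satclause:sat-lit-in-clause} of $\textnormal{SAT}^{n,r}$.
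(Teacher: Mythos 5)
Your proof takes essentially the same route as the paper's: you derive, by induction on the level $i$, exactly the same key $2$-DNF (the paper calls it $D_{i,j}$, you call it $\Phi_{i,j}$), with the same base case combining \eqref{satclause:sat-lit-in-clause}, \eqref{satclause:sat-by-positive}/\eqref{satclause:sat-by-negative}, \eqref{rrefclause-n-r:axioms}, \eqref{satclause:at-least-one-lit-sat}, \eqref{rrefclause:I-dom}, the same inductive step cycling through the cut/transfer/unused-to-zero/domain axioms with the $j'$-elimination nested inside the $\ell$-loop, and the same endgame via \eqref{rrefclause:Sst-true} and \eqref{rrefclause:empty-clause}, giving the same size bound. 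The only cosmetic differences are bookkeeping choices (e.g., your step (b) attributes the reduction of the $(\ell,1)$ term to \eqref{rrefclause:res-L-cut}, whereas the derivation actually passes through \eqref{rrefclause:nontaut} and a weakened tautology axiom, and the paper weakens to the full $D_{i,j}$ earlier than you do), none of which change the argument or the asymptotics.
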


\begin{proof}
By induction on $i \in [s]$ we derive for each $j \in [t]$ the formula
\begin{equation}
 D_{i,j} := \bigvee_{u \in [k]}\neg S(i,j) \lor \bigvee_{\ell \in [n], b \in \{0,1\}} \left( D(i,j,\ell,b) \land T(\ell)^b \right).
\end{equation}
Then, cutting $D_{s,t}$ with \eqref{rrefclause:empty-clause} for each $\ell \in [n]$ and $b \in \{0,1\}$, followed by $k$ cuts with clauses \eqref{rrefclause:Sst-true}, yields the empty clause.

Base case: $i=1$. For each $j \in [t], m \in [r], \ell \in [n], b \in \{0,1\}$, cut \eqref{satclause:sat-lit-in-clause} with \eqref{rrefclause-n-r:axioms} to obtain $\bigvee_{u \in [k]} \neg S_u(1,j) \lor \neg I(j,m) \lor \neg T(m,\ell,b) \lor D(1,j,\ell,b)$. Applying $\land$-introduction to this and $\neg T(m,\ell,b) \lor T(\ell)^b$ (which is either
\eqref{satclause:sat-by-positive} or \eqref{satclause:sat-by-negative}) yields 
\begin{equation}
\label{eq:refl-up-bound-base-1}
\bigvee_{u \in [k]} \neg S_u(1,j) \lor \neg I(j,m) \lor \neg T(m,\ell,b) \lor \left(D(1,j,\ell,b) \land T(\ell)^b \right). 
\end{equation}
Cutting \eqref{eq:refl-up-bound-base-1} for each $\ell \in [n]$ and $b \in \{0,1\}$ with \eqref{satclause:at-least-one-lit-sat} gives $\neg I(j,m) \lor D_{1,j}$. Cutting this for each $m \in [r]$ with \eqref{rrefclause:I-dom} yields $D_{1,j}$.

Induction step: Assume we have derived $D_{i-1,j'}$ for all $j' \in [t]$. We derive $D_{i,j}$ for each $j \in [t]$. Write $P_1$ in place of $L$ and  $P_0$ in place of $R$. 

For each $\ell \in [n], b \in \{0,1\}, j' \in [t]$, cut $\bigvee_{u \in [k]} \neg S_u(i,j) \lor \neg D(i-1,j',\ell,1) \lor \neg D(i-1,j',\ell,0)$ (from \eqref{rrefclause:nontaut}) with $\bigvee_{u \in [k]} \neg S_u(i,j) \lor \neg P_{1-b}(i,j,j') \lor \neg V(i,j,\ell) \lor D(i-1,j',\ell,1-b)$ (which is from \eqref{rrefclause:res-L-cut} or \eqref{rrefclause:res-R-cut}) to obtain $\bigvee_{u \in [k]} \neg S_u(i,j) \lor \neg P_{1-b}(i,j,j') \lor \neg V(i,j,\ell) \lor \neg D(i-1,j',\ell,b)$. Cut this with $D_{i-1,j'}$ to get
\begin{equation}
\label{eq:refl-up-bound-induc-1}
\bigvee_{u \in [k]} \neg S_u(i,j) \lor \neg P_{1-b}(i,j,j') \lor \neg V(i,j,\ell) \lor \left(D_{i-1,j'} \setminus \{ D(i-1,j',\ell,b) \land T(\ell)^b \}\right).
\end{equation}
Cutting \eqref{eq:refl-up-bound-induc-1} with $T(\ell) \lor \neg T(\ell)$ yields
\begin{align}
 \begin{split}
\label{eq:refl-up-bound-induc-2}
\bigvee_{u \in [k]} \neg S(i,j) & \lor \neg P_{1-b}(i,j,j') \lor \neg V(i,j,\ell) \lor T(\ell)^{1-b} \\
& \lor \left( D_{i-1,j'} \setminus \{ D(i-1,j',\ell,0) \land \neg T(\ell), D(i-1,j',\ell,1) \land T(\ell) \} \right).
 \end{split}
\end{align}
Next, for each $\ell' \in [n] \setminus \{\ell\}$ and $b' \in \{0,1\}$, apply $\land$-introduction to  $T(\ell') \lor \neg T(\ell')$ and $\bigvee_{u \in [k]} \neg S_u(i,j) \lor \neg P_{1-b}(i,j,j') \lor \neg V(i,j,\ell) \lor \neg D(i-1,j',\ell',b') \lor D(i,j,\ell',b')$ (from \eqref{rrefclause:res-L-transf} or \eqref{rrefclause:res-R-transf}) to get
\begin{align}
 \begin{split}
\label{eq:refl-up-bound-induc-3}
 \bigvee_{u \in [k]} \neg S_u(i,j) \lor \neg P_{1-b}(i,j,j') \lor \neg V(i,j,\ell) & \lor \left( D(i,j,\ell',b') \land T(\ell')^{b'} \right) \\
 & \lor \neg D(i-1,j',\ell',b') \lor T(\ell')^{1-b'}.
 \end{split}
\end{align}
Cutting \eqref{eq:refl-up-bound-induc-3}, for each $\ell'\in [n] \setminus \{\ell\}$ and $b' \in \{0,1\}$, with \eqref{eq:refl-up-bound-induc-2} results, after a weakening, in
\begin{equation}
\label{eq:refl-up-bound-induc-4}
\bigvee_{u \in [k]} \neg S_u(i-1,j') \lor \neg P_{1-b}(i,j,j') \lor \neg V(i,j,\ell) \lor T(\ell)^{1-b} \lor D_{i,j}.
\end{equation}
Cut \eqref{eq:refl-up-bound-induc-4}, for each $u' \in [k]$, with $\bigvee_{u \in [k]} \neg S_u(i,j) \lor \neg P_{1-b}(i,j,j') \lor S_{u'}(i-1,j')$ (from \eqref{rrefclause:L-unused-to-zero} or \eqref{rrefclause:R-unused-to-zero}) to get
\begin{equation}
\label{eq:refl-up-bound-induc-5}
 \neg P_{1-b}(i,j,j') \lor \neg V(i,j,\ell) \lor T(\ell)^{1-b} \lor D_{i,j}.
\end{equation}
Recall that we have obtained \eqref{eq:refl-up-bound-induc-5} for each $\ell \in [n], b \in \{0,1\}, j' \in [t]$. Cutting \eqref{eq:refl-up-bound-induc-5}, for each $j' \in [t]$, 
with $\bigvee_{u \in [k]} \neg S_u(i,j) \lor \bigvee_{j' \in [t]} P_{1-b}(i,j,j')$ (which is from \eqref{rrefclause:L-dom} or \eqref{rrefclause:R-dom}) yields $\neg V(i,j,\ell) \lor T(\ell)^{1-b} \lor D_{i,j}$. We have derived such clause for each $\ell \in [n], b \in \{0,1\}$, so a cut on $T(\ell)$ gives $\neg V(i,j,\ell) \lor D_{i,j}$, and cutting this, for each $\ell \in [n]$, with \eqref{rrefclause:V-dom} yields $D_{i,j}$.

As for bounding the size of the refutation, the size of the base case is $O(t(rn^2 + r^2 + rnk))$,
the total size of the induction steps is $O(st(n^3t + n^2tk + ntk^2 + nt^2 ))$, 
and the size of the finish is $O(n^2 + nk)$. Altogether, this is $O(trn^2 + tr^2 + trnk + st^2n^3 + st^2n^2k + st^2nk^2 + st^3n)$.
\end{proof}

\section{The Lower Bounds}
\label{sec:res2-no-wfdp}

We need a modification of two results of Segerlind, Buss and Impagliazzo \cite{segerlind-buss-impagliazzo}. Namely, their switching lemma works with the usual notion of width of a clause, and we would like it to work with the notion of `number of pairs mentioned' in the sense of Definition \ref{def:mentioned} below. This is because our random restrictions have to respect the functional properties of the formula $\textnormal{REF}^F_{s,t}$ (expressed by clauses \eqref{refclause:V-dom} - \eqref{refclause:R-func}), and it is therefore convenient to require that they evaluate variables in groups determined by home pair. Consequently, we do not want to represent a $k$-DNF simplified by a random restriction by a standard decision tree like in \cite{segerlind-buss-impagliazzo}, as such a tree would branch exponentially in $t$, which would prevent taking union bounds over the branches of shallow trees occurring in the proof of our switching lemma. To circumvent this problem, the decision trees we construct (called decision trees over $\textnormal{REF}^F_{s,t}$) ask queries like ``What is the left premise of clause $C_{i,j}$?" rather than queries like ``Is $L(i,j,j')$ true?". This makes their branching a bit more manageable (though still exponential in the number of variables of $F$), but there is a price to pay in terms of parameters of the switching lemma (Theorem \ref{thm:switching_a-DNFs_to_trees}) and its more complicated proof, which uses certain independence properties of our random restrictions. Also, such trees no longer represent formulas over all partial assignments, but only over assignments that do not violate the functionality axioms and evaluate variables in groups determined by home pair. Accordingly, we need to adapt to our different notions of width and representation a result in \cite{segerlind-buss-impagliazzo} which says that if the lines of a $\textnormal{Res}(k)$ refutation can be strongly represented by shallow decision trees, the refutation can be converted into a resolution refutation of a small width.

Our random restrictions (Definition \ref{def:random_restriction}) will be applied to $k$-DNFs in the variables of $\textnormal{R}^k\textnormal{REF}^F_{s,t}$ and they are defined in two stages, the first of which evaluates all the $S$-variables, thereby declaring some pairs $(i,j)$ selected (when $\bigwedge_{u \in [k]}S_u(i,j)$ evaluates to 1), and in the second stage all variables with a home pair that was not selected are evaluated randomly and independently. The restricted formula is therefore in the variables of $\textnormal{REF}^F_{s,t}$, and the purpose of the switching lemma is to show that it can be represented by a shallow decision tree over $\textnormal{REF}^F_{s,t}$ with a high probability. We begin with a definition of these trees and the notion of representation.
Please recall Definition \ref{def:home-pair-and-set-to} before reading the next one.

\begin{definition}
\label{def:decision-tree-for-REF-F-s-t}
A \emph{decision tree over} $\textnormal{REF}^{F}_{s,t}$ is a rooted tree $T$ in which every internal node is labelled with a pair $(i,j) \in [s] \times [t]$. There are $2^{2n} \cdot r$ edges leaving each node labelled with $(1,j) \in \{1\} \times [t]$, and they are labelled with pairs $(C_{1,j}, m)$, where $C_{1,j}$ is a clause in variables $x_1, \ldots, x_n$, and $m \in [r]$. There are $2^{2n} \cdot nt^2$ edges leaving each node labelled with $(i,j) \in \{2, \ldots, s\} \times [t]$, and these edges are labelled with tuples $(C_{i,j}, \ell, j', j'')$, where $C_{i,j}$ is a clause in variables $x_1, \ldots, x_n$, $\ell \in [n]$, and $j',j'' \in [t]$. 
The leaves of $T$ are labelled with either 0 or 1. No pair $(i,j)$ is allowed to label two nodes on any path from the root to a leaf of $T$. For each node $v$ of $T$, the path from the root to $v$ is viewed as a partial assignment $\pi_v$ that for each edge that is on the path, leaving a node  with a label $(i,j)$, evaluates the variables of $\textnormal{REF}^{F}_{s,t}$ with home pair $(i,j)$ in the following way: If $i=1$ and the label of the edge is $(C_{1,j}, m)$, then $\pi_v$ sets $D(1,j,\cdot, \cdot)$ to $C_{1,j}$ and $I(j, \cdot)$ to $m$; otherwise $i \in [s] \setminus \{1\}$ and the label of the edge is some tuple $(C_{i,j}, \ell, j', j'')$, in which case $\pi_v$ sets $D(i,j,\cdot, \cdot)$ to $C_{i,j}$, $V(i,j,\cdot)$ to $\ell$, $L(i,j,\cdot)$ to $j'$, and $R(i,j,\cdot)$ to $j''$.
For $b \in \{0,1\}$, we let $\textnormal{Br}_b(T)$ stand for the set of paths (viewed as partial assignments) that lead from the root to a leaf labelled with $b$. 
\end{definition}

\begin{definition}
\label{def:strongly-represents-and-index-height}
Let $G$ be a DNF in the variables of $\textnormal{REF}^{F}_{s,t}$. We say that a decision tree $T$ over $\textnormal{REF}^{F}_{s,t}$ \emph{strongly represents} $G$ if for every $\pi \in \textnormal{Br}_0(T)$, for every $q \in G$, $q \! \restriction \! \pi = 0$ and for every $\pi \in \textnormal{Br}_1(T)$, there exists $q \in G$, $q \! \restriction \! \pi = 1$. The \emph{representation index-height of} $G$, $h_{\textnormal{i}}(G)$, is the minimum height of a decision tree over $\textnormal{REF}^{F}_{s,t}$ strongly representing $G$.
\end{definition}

\begin{definition}
\label{def:mentioned}
Let $\pi$ be a partial assignment to the variables of $\textnormal{REF}^{F}_{s,t}$, and let $E$ be a clause in the variables of $\textnormal{REF}^F_{s,t}$. We say that a pair $(i,j) \in [s] \times [t]$ is \emph{mentioned in} $\pi$ (resp. $E$) if it is the home pair of a variable in $\dom(\pi)$ (resp. a literal of which is in $E$).
\end{definition}

\begin{definition}
A partial assignment $\pi$ to the variables of $\textnormal{REF}^{F}_{s,t}$ is called \emph{respectful} if for each $(i,j) \in [s] \times [t]$, either $(i,j)$ is not mentioned in $\pi$, or $i \in [s]\! \setminus \! \{1\}$ and each of $D(i,j,\cdot, \cdot)$, $V(i,j, \cdot)$, $R(i,j, \cdot)$, $L(i,j, \cdot)$ is set by $\pi$, or $i=1$ and both $D(1,j,\cdot, \cdot)$ and $I(j, \cdot)$ are set by $\pi$. In other words, respectful assignments are exactly the assignments of the form $\pi_v$ where $v$ is a node of a decision tree over $\textnormal{REF}^{F}_{s,t}$.

If $T$ is a decision tree over $\textnormal{REF}^{F}_{s,t}$ and $\pi$ is a respectful partial assignment, $T \! \restriction \! \pi$ is obtained as follows: for each node $v$ of $T$ with a label $(i,j)$ that is mentioned in $\pi$, contract the edge whose label determines an assignment to the variables with home pair $(i,j)$ that is a subset of $\pi$, and delete all other edges leaving $v$ (and delete their associated subtrees).
\end{definition}

\begin{lemma}
\label{lem:decision_tree_restriction_by_respectful}
Let $T$ be a decision tree over $\textnormal{REF}^{F}_{s,t}$, let $G$ be a DNF, and let $\pi$ be a respectful partial assignment. If $T$ strongly represents $G$, then $T \! \restriction \! \pi$ strongly represents $G \! \restriction \! \pi$.
\end{lemma}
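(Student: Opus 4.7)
The plan is to set up a natural bijection between branches of $T \! \restriction \! \pi$ and a specific subset of branches of $T$, and then transfer the strong-representation property of $T$ for $G$ to $T \! \restriction \! \pi$ for $G \! \restriction \! \pi$ branch by branch. The first thing to note is a bookkeeping observation: because $\pi$ is respectful, the set $\dom(\pi)$ is exactly the set of variables of $\textnormal{REF}^F_{s,t}$ whose home pair is mentioned in $\pi$. Consequently, at any node of $T$ whose label $(i,j)$ is mentioned in $\pi$, there is a unique outgoing edge whose label encodes an assignment to those variables consistent with $\pi$; this is the edge the restriction operation contracts. In particular, the branches surviving in $T \! \restriction \! \pi$ assign variables only at pairs \emph{not} mentioned in $\pi$, so $\dom(\rho')$ and $\dom(\pi)$ are disjoint for every $\rho' \in \textnormal{Br}_b(T \! \restriction \! \pi)$.

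Given such a branch $\rho'$, reinserting the contracted edges produces a branch $\rho$ of $T$ ending at the same leaf (same label $b$), and by construction $\rho \subseteq \rho' \cup \pi$ with $\rho$ and $\pi$ compatible. This is the correspondence I will use. The two representation conditions are then checked separately. For a $0$-branch $\rho' \in \textnormal{Br}_0(T \! \restriction \! \pi)$ with associated $\rho \in \textnormal{Br}_0(T)$, I will take any $q' \in G \! \restriction \! \pi$, write $q' = q \! \restriction \! \pi$ for some $q \in G$ not falsified by $\pi$, and invoke $q \! \restriction \! \rho = 0$ to pick a literal $\ell \in q$ with $\rho(\ell) = 0$. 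The key point is that the variable of $\ell$ cannot lie in $\dom(\pi)$: otherwise its home pair is mentioned in $\pi$, so $\rho$ would be forced to agree with $\pi$ on it, giving $\pi(\ell) = 0$ and contradicting that $\pi$ does not falsify $q$. Therefore $\ell$ survives in $q'$, and since $\rho'$ and $\rho$ agree on all variables whose home pair is not mentioned in $\pi$, $\rho'(\ell) = 0$, giving $q' \! \restriction \! \rho' = 0$.

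For a $1$-branch $\rho' \in \textnormal{Br}_1(T \! \restriction \! \pi)$ with associated $\rho \in \textnormal{Br}_1(T)$, strong representation supplies a term $q \in G$ with $q \! \restriction \! \rho = 1$, i.e.\ every literal of $q$ is satisfied by $\rho$. Any literal of $q$ whose variable lies in $\dom(\pi)$ has its home pair mentioned in $\pi$, and there $\rho$ equals $\pi$, so $\pi$ satisfies that literal as well; hence $\pi$ does not falsify $q$, and $q' := q \! \restriction \! \pi$ belongs to $G \! \restriction \! \pi$. The remaining literals of $q'$ have variables outside $\dom(\pi)$, and on those variables $\rho'$ coincides with $\rho$, so they are satisfied by $\rho'$, giving $q' \! \restriction \! \rho' = 1$.

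The only potentially subtle point, which I expect to be the main thing to double-check rather than a real obstacle, is the clean splitting $\dom(\rho') \cap \dom(\pi) = \emptyset$ together with the compatibility $\rho \subseteq \rho' \cup \pi$; both follow from respectfulness of $\pi$ and from the fact that edge labels in decision trees over $\textnormal{REF}^F_{s,t}$ assign variables in whole home-pair groups. Once this correspondence is in place the two cases above are routine, and together they yield that $T \! \restriction \! \pi$ strongly represents $G \! \restriction \! \pi$.
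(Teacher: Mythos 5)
Your proof is correct and takes essentially the same approach as the paper: both set up the same bijection between branches of $T \restriction \pi$ and the corresponding branches of $T$ (with the key identity that the $T$-branch equals the $T\restriction\pi$-branch together with $\pi$'s values on the contracted nodes), and both use compatibility of $\pi$ with the branch to transfer strong representation. You merely spell out the two cases ($0$-branches and $1$-branches) more explicitly than the paper's terse statement of the identities $q \restriction (\pi \cup \pi_u) = (q\restriction\pi)\restriction\pi_v$ and $q\restriction\pi_u = b \Rightarrow q\restriction(\pi\cup\pi_u)=b$.
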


\begin{proof}
For a leaf $v$ of $T \! \restriction \! \pi$ there is a unique leaf $u$ of $T$ such that $\pi_v = \pi_u \setminus \pi$, where $\pi_u$, $\pi_v$ are defined as in Definition \ref{def:decision-tree-for-REF-F-s-t}. Moreover, $v$ has the same label as $u$, and $\pi$ and $\pi_u$ are compatible. Therefore, for a term $q \in G$ we have $q \! \restriction \! (\pi \cup \pi_u) = q \! \restriction \! (\pi \cup \pi_v) = (q \! \restriction \! \pi) \! \restriction \! \pi_v$. Also, for $b \in \{0,1\}$, if $q \! \restriction \! \pi_u = b$ then $q \! \restriction \! (\pi \cup \pi_u) = b$.
\end{proof}

In the other direction, we have the following lemma.

\begin{lemma}
\label{lem:decision_tree_composition}
Let $T$ be a decision tree over $\textnormal{REF}^{F}_{s,t}$, and let $G$ be a DNF in the variables of $\textnormal{REF}^{F}_{s,t}$. For each leaf $v$ of $T$, let $T_v$ be a decision tree that strongly represents $G \! \restriction \! \pi_v$, where $\pi_v$ is the path in $T$ from the root to $v$. Moreover, assume that each label $(i,j)$ of an internal node of $T_v$ is a home pair of a variable of $G \! \restriction \! \pi_v$. Then the tree $T'$ obtained by appending to each leaf $v$ of $T$ the tree $T_v$ strongly represents $G$.
\end{lemma}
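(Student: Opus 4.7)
The plan is to verify two things: (a) that $T'$ is a legitimate decision tree over $\textnormal{REF}^F_{s,t}$ (in particular, no pair $(i,j) \in [s] \times [t]$ labels two nodes on any root-to-leaf path), and (b) that for each leaf $w$ of $T'$ the two conditions defining strong representation hold with respect to $G$. A root-to-leaf path in $T'$ splits into the portion inside $T$, leading to some leaf $v$ of $T$, followed by a portion inside $T_v$, leading to a leaf $w$ of $T_v$; call the associated partial assignments $\pi_v$ and $\tau_w$, so that the full path assignment is $\pi_v \cup \tau_w$.

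For (a), the labels that appear along the $T$-portion are exactly the pairs mentioned in $\pi_v$. By the hypothesis of the lemma, every internal label of $T_v$ is the home pair of some variable of $G \! \restriction \! \pi_v$; such a variable is not in $\dom(\pi_v)$, and since $\pi_v$ is respectful the corresponding home pair is therefore not mentioned in $\pi_v$. Hence the two sets of labels are disjoint, and within $T$ and within $T_v$ no repetition occurs by the definition of a decision tree. This same disjointness implies that $\dom(\pi_v) \cap \dom(\tau_w) = \emptyset$, so $\pi_v \cup \tau_w$ is well defined as a partial assignment and coincides with the path assignment $\pi_w^{T'}$ determined by $w$ in $T'$. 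For (b), fix $q \in G$; using disjointness of domains, $q \! \restriction \! (\pi_v \cup \tau_w) = (q \! \restriction \! \pi_v) \! \restriction \! \tau_w$. If $w$ is a $0$-leaf of $T_v$ and $q \! \restriction \! \pi_v = 0$, then the restriction is $0$; otherwise $q \! \restriction \! \pi_v$ is a term of $G \! \restriction \! \pi_v$, and strong representation of $G \! \restriction \! \pi_v$ by $T_v$ gives $(q \! \restriction \! \pi_v) \! \restriction \! \tau_w = 0$. Either way, every $q \in G$ is falsified, as required. If $w$ is a $1$-leaf, strong representation of $G \! \restriction \! \pi_v$ by $T_v$ yields some $q^\ast \in G \! \restriction \! \pi_v$ with $q^\ast \! \restriction \! \tau_w = 1$; writing $q^\ast = q \! \restriction \! \pi_v$ for an appropriate $q \in G$ gives $q \! \restriction \! (\pi_v \cup \tau_w) = 1$.

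I do not expect any real obstacle: the only delicate point is the disjointness bookkeeping $\dom(\pi_v) \cap \dom(\tau_w) = \emptyset$, which is exactly what the hypothesis ``each label $(i,j)$ of an internal node of $T_v$ is a home pair of a variable of $G \! \restriction \! \pi_v$'' is designed to ensure, together with respectfulness of $\pi_v$. Once that is in place, the strong-representation verification is a short case split that mirrors the argument of Lemma \ref{lem:decision_tree_restriction_by_respectful}.
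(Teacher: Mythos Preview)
Your proposal is correct and is essentially a careful unpacking of what the paper leaves implicit: the paper's entire proof reads ``This follows directly from the definitions.'' Your two-part check (no repeated node labels along any root-to-leaf path, using respectfulness of $\pi_v$ together with the hypothesis on labels of $T_v$; then the case split on the leaf label using $q \!\restriction\! (\pi_v \cup \tau_w) = (q \!\restriction\! \pi_v) \!\restriction\! \tau_w$) is exactly the verification the paper invites the reader to perform, and it goes through as you describe.
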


\begin{proof}
This follows directly from the definitions.
\end{proof}

\begin{definition}
Let $C$ be a clause in the variables of $\textnormal{REF}^{F}_{s,t}$. The \emph{index-width} of $C$ is the number of  pairs $(i,j) \in [s] \times [t]$ that are mentioned in $C$. The index-width of a resolution derivation is the maximum index-width of a clause in the derivation.
\end{definition}

The following theorem is an adaptation of \cite[Theorem 5.1]{segerlind-buss-impagliazzo}. 

\begin{theorem}
\label{thm:shallow_trees_to_small_width_resolution}
Let $H$ be a CNF in the variables of $\textnormal{REF}^{F}_{s,t}$ whose every clause has index-width at most $h \geq 1$. If for some $k \geq 1$ there is a $\textnormal{Res}(k)$ refutation of $H$ such that for each line $G$ of the refutation, $h_{\textnormal{i}}(G) \leq h$, then there is a resolution refutation of $H$ together with the functionality clauses \eqref{refclause:V-dom} - \eqref{refclause:R-func} of $\textnormal{REF}^{F}_{s,t}$ such that the index-width of the refutation is at most $3h$.
\end{theorem}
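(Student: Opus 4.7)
The plan is to prove by induction on the position of a line $G$ in the given $\textnormal{Res}(k)$ refutation the following claim: for every height-$\le h$ decision tree $T_G$ over $\textnormal{REF}^F_{s,t}$ strongly representing $G$ and every $\pi \in \textnormal{Br}_0(T_G)$, the \emph{branch clause} $C_\pi$ admits a resolution derivation of index-width at most $3h$ from $H$ together with the functionality clauses \eqref{refclause:V-dom}--\eqref{refclause:R-func}. Here $C_\pi$ is the disjunction, over the home pairs $(i,j)$ mentioned in $\pi$, of the literals whose conjunction negates the assignment $\pi$ makes to variables with home pair $(i,j)$; by construction its index-width equals the number of pairs mentioned in $\pi$, hence at most $h$. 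Applied to the final line (the empty clause, strongly represented by the one-node tree labelled $0$, with $\pi=\emptyset$ and $C_\pi$ empty), this claim supplies the desired resolution refutation.

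The base cases $G = x \lor \neg x$ (where one may take $\textnormal{Br}_0(T_G)$ empty) and $G$ a clause of $H$ are immediate: for the latter one may pick the canonical $T_G$ that queries exactly the (at most $h$) home pairs mentioned in $G$, so that any $\pi \in \textnormal{Br}_0(T_G)$ falsifies every literal of $G$, making $C_\pi$ a weakening of $G$. For the inductive step the principal case is Cut, $G_3 = A \lor B$ from $G_1 = A \lor (l_1 \land \cdots \land l_j)$ and $G_2 = B \lor \neg l_1 \lor \cdots \lor \neg l_j$. Fix $\rho \in \textnormal{Br}_0(T_{G_3})$. By Lemma \ref{lem:decision_tree_restriction_by_respectful}, $T_{G_1} \!\restriction\! \rho$ and $T_{G_2} \!\restriction\! \rho$ strongly represent $G_1 \!\restriction\! \rho$ and $G_2 \!\restriction\! \rho$, each still of height at most $h$. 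Walk down $T_{G_1} \!\restriction\! \rho$: at a $0$-leaf $\sigma_1$ the induction hypothesis supplies a derivation of $C_{\rho \cup \sigma_1}$; at a $1$-leaf $\sigma_1$ the satisfied term cannot be a term of $A$ (every term of $A$ already has a literal falsified by $\rho$, since $\rho \in \textnormal{Br}_0(T_{G_3})$), so $l_1 \land \cdots \land l_j$ must be true under $\rho \cup \sigma_1$, and one then walks down $T_{G_2} \!\restriction\! (\rho \cup \sigma_1)$, in which no $1$-leaf can occur (a term of $B$ would contradict $\rho \in \textnormal{Br}_0(T_{G_3})$; a $\neg l_i$ would contradict the truth of $l_i$), so every $0$-leaf $\sigma_2$ yields $C_{\rho \cup \sigma_1 \cup \sigma_2}$ by induction. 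A tree-like resolution following the combined walk then eliminates the literals introduced by $\sigma_1$ and $\sigma_2$ and produces $C_\rho$: at each internal query of home pair $(i,j)$ one resolves the sibling branch clauses on literals with home pair $(i,j)$, using \eqref{refclause:V-dom}--\eqref{refclause:R-func} to enumerate the $V$-, $I$-, $L$-, $R$-blocks narrowly and pairwise $D$-variable resolution to enumerate the $2^{2n}$ possible $D$-block assignments; since every intermediate clause mentions only home pairs in $\rho \cup \sigma_1 \cup \sigma_2$, its index-width is at most $|\rho|+|\sigma_1|+|\sigma_2|\le 3h$. Weakening and $\land$-introduction are simpler variants of the same argument.

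The main obstacle is the narrow bookkeeping in the combined walk of the Cut case, and specifically handling the $D$-variables of $\textnormal{REF}^F_{s,t}$, which carry no functionality constraints: cutting them out of a queried home-pair block requires resolving over all $2^{2n}$ consistent $D$-assignments by $2n$ rounds of pairwise resolution. This is expensive in length but, crucially, costs zero extra index-width (each such resolution is confined to literals from a single home pair), so it is compatible with the $3h$ bound that arises from the three nested walks $\rho$, $\sigma_1$, $\sigma_2$, each of depth at most $h$.
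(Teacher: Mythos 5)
Your proposal is correct and follows essentially the same route as the paper: an induction over the lines of the $\textnormal{Res}(k)$ refutation deriving, for each $0$-branch $\pi$ of a representing decision tree, the branch clause $C_\pi$ in index-width $\le 3h$, where the two-premise step combines (restrictions of) the two trees into a composite of height $\le 2h$, argues via soundness that no $1$-leaf of the composite is compatible with $\rho$, and then resolves away the queried home pairs using the functionality clauses \eqref{refclause:V-dom}--\eqref{refclause:R-func} together with a complete binary enumeration of the $D$-block. The only cosmetic difference is that you pre-restrict the premise trees by $\rho$ and argue the $1$-leaf exclusion term-by-term for the Cut rule specifically, whereas the paper builds the unrestricted composite tree and handles all rules ($d\in[2]$) uniformly via a strong-soundness observation, carrying $\sigma$ along in a leaves-to-root inner induction.
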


\begin{proof}
Denote $\Pi$ the  $\textnormal{Res}(k)$ refutation. For a line $G$ in $\Pi$, let $T_G$ be a decision tree over $\textnormal{REF}^{F}_{s,t}$ of minimum height that strongly represents $G$. We can assume that no node of $T_G$ is labelled with a pair $(i,j)$ that is not a home pair of any variable of $G$.

For any respectful partial assignment $\pi$ let $C_{\pi}$ be the clause consisting of the following literals: $D(i,j,\ell,b)$ if and only if $\pi(D(i,j,\ell,b)) = 0$, $\neg D(i,j,\ell,b)$ if and only if $\pi(D(i,j,\ell,b)) = 1$, $\neg I(j,m)$ if and only if $\pi$ sets $I(j, \cdot)$ to $m$, $\neg V(i,j,\ell)$ if and only if $\pi$ sets $V(i,j,\cdot)$ to $\ell$, $\neg L(i,j,j')$ if and only if $\pi$ sets $L(i,j,\cdot)$ to $j'$, $\neg R(i,j,j')$ if and only if $\pi$ sets $R(i,j,\cdot)$ to $j'$. 

By induction on the lines of $\Pi$ we show that for each line $G$ of $\Pi$ and for each $\pi \in \textnormal{Br}_0(T_G)$, there is a resolution derivation $\Pi_G(\pi)$ of $C_{\pi}$ from $H$ together with the clauses \eqref{refclause:V-dom} - \eqref{refclause:R-func}, such that the index-width of $\Pi_G(\pi)$ is at most $3h$. The theorem then follows from $\{C_{\pi} : \pi \in \textnormal{Br}_0(T_{\emptyset})\} = \{C_{\emptyset}\} = \{\emptyset\}$. 

Assume that $G$ is an axiom $X \lor \neg X$. Then all the branches of $T_G$ are labelled with 1, and so $\{C_{\pi} : \pi \in \textnormal{Br}_0(T_G)\} = \emptyset$.

Next assume that $G \in H$. Let $\pi \in \textnormal{Br}_0(T_G)$. Since $G$ is a clause, the node labels of $T_G$ are exactly the pairs $(i,j)$ mentioned in $G$. Note that since $G \! \restriction \! \pi = 0$, for every $(i,j)$ each literal of a variable in $D(i,j,\cdot, \cdot)$ that is in $G$ is also in $C_{\pi}$. Suppose that $\pi$ sets $V(i,j,\cdot)$ to $\ell \in [n]$. If there is a literal in $G$ of a variable from $V(i,j,\cdot)$ such that the literal is not in $C_{\pi}$, then the literal must be $V(i,j,\ell')$ for some $\ell' \in [n]$ with $\ell' \neq \ell$. This follows from $G \! \restriction \! \pi = 0$ and $\neg V(i,j,\ell) \in C_{\pi}$. Such literals $V(i,j,\ell')$ can be removed from $G$ by resolving with the clause $\neg V(i,j,\ell) \lor \neg V(i,j,\ell')$ from \eqref{refclause:V-func}. Similarly, we remove from $G$ the literals in $G \setminus C_{\pi}$ of $I,L,R$-variables by resolving with the corresponding clauses from \eqref{refclause:I-func}, \eqref{refclause:L-func}, \eqref{refclause:R-func}, respectively. We have thus obtained a resolution derivation $\Pi_G(\pi)$ of $C_{\pi}$ from $\{G\}$ together with the clauses \eqref{refclause:V-func} - \eqref{refclause:R-func}. Because the index-width of $G$ is at most $h$, the same is true for the clauses in $\Pi_G(\pi)$.

Now assume that line $G$ in $\Pi$ is inferred from previously derived lines $G_1, \ldots, G_d$ for $d \in [2]$. By the induction hypothesis, we have for each $c \in [d]$ and for each  $\pi \in \textnormal{Br}_0(T_{G_c})$ a resolution derivation $\Pi_G(\pi)$ of $C_{\pi}$ with the required properties. First construct a decision tree $T$ as follows: if $d=1$, $T$ is $T_{G_1}$; if $d=2$, append to each branch $\pi \in \textnormal{Br}_1(T_{G_1})$ the tree $T_{G_2} \! \restriction \! \pi$. Observe that for each $\pi \in \textnormal{Br}_0(T)$ there is $c \in [d]$ and $\pi' \in \textnormal{Br}_0(T_c)$ such that $\pi' \subseteq \pi$, and  $C_{\pi}$ is a weakening of $C_{\pi'}$. Also, the index-width of $C_{\pi}$ is at most $2h$, because so is the height of $T$. 
For a node $v$ of $T$ define a partial assignment $\pi_v$ as in Definition \ref{def:decision-tree-for-REF-F-s-t}.

Let $\sigma \in \textnormal{Br}_0(T_G)$ be given. Inductively, from the leaves to the root of $T$, we show that if a node $v$ of $T$ is such that $\pi_v$ is compatible with $\sigma$, then there is a resolution derivation $\Pi_G(\pi_v, \sigma)$ of $C_{\pi_v} \lor C_{\sigma}$ from $H$ together with the clauses \eqref{refclause:V-dom} - \eqref{refclause:R-func}, such that the index-width of $\Pi_G(\pi_v, \sigma)$ is at most $3h$. When we reach the root of $T$, we will have obtained a derivation $\Pi_G(\emptyset, \sigma)$ of $C_{\sigma}$, and this is the derivation $\Pi_G(\sigma)$ we are after.

Assume that $v$ is a leaf of $T$ and $\pi_v$ is compatible with $\sigma$. Then $\pi_v \in \textnormal{Br}_0(T)$. This can be seen as follows. It is easy to check that the rules of $\textnormal{Res}(k)$ have the property, called strong soundness, that any partial assignment that satisfies all premises of a rule also satisfies the conclusion of the rule. If we had $\pi_v \in \textnormal{Br}_1(T)$, then for each $c \in [d]$, $\pi_v$ contains some $\pi_c \in \textnormal{Br}_1(T_{G_c})$, and so $G_c \! \restriction \! \pi_v = G_c \! \restriction \! \pi_c = 1$ because $T_{G_c}$ strongly represents $G_c$. By strong soundness it follows that $G \! \restriction \! \pi_v = 1$. But this means that $\pi_v$ cannot be compatible with $\sigma$, because $\sigma$ falsifies every term of $G$. So indeed $\pi_v \in \textnormal{Br}_0(T)$. Further, we have that $C_{\pi_v} \lor C_{\sigma}$ is a weakening of $C_{\pi_v}$, which in turn is a weakening of $C_{\pi'}$ for some $\pi' \in \textnormal{Br}_0(T_c)$ and some $c \in [d]$ such that that $\pi' \subseteq \pi_v$, by the construction of $T$. By the inductive hypothesis we have a resolution refutation $\Pi_G(\pi')$ of $C_{\pi'}$ with the required properties.  Because the index-width of $C_{\pi_v}$ is at most $2h$, the index-width of $C_{\pi_v} \lor C_{\sigma}$ is at most $3h$. We have thus obtained a resolution derivation $\Pi_G(\pi_v, \sigma)$ of $C_{\pi_v} \lor C_{\sigma}$ with the required properties.

Now assume that $v$ is labelled with a pair $(i,j)$ and $\pi_v$ is compatible with $\sigma$. We distinguish two cases. In the first case, assume that $(i,j)$ is mentioned in $\sigma$. Then there is a child $u$ of $v$ such that $\pi_u \setminus \pi_v \subseteq \sigma$. Also, $\pi_u$ is compatible with $\sigma$. By the induction hypothesis we therefore have a resolution refutation $\Pi_G(\pi_u,\sigma)$ of $C_{\pi_u} \lor C_{\sigma}$ with the required properties. Because $\pi_u \cup \sigma = \pi_v \cup \sigma$, we have $C_{\pi_u} \lor C_{\sigma} = C_{\pi_v} \lor C_{\sigma}$, and so we define $\Pi_G(\pi_v,\sigma)$ to be $\Pi_G(\pi_u,\sigma)$. In the second case, assume that $(i,j)$ is not mentioned in $\sigma$. Then for each child $u$ of $v$, $\pi_u$ is compatible with $\sigma$. By the induction hypothesis, for each such $u$ there is a resolution refutation $\Pi_G(\pi_u,\sigma)$ of $C_{\pi_u} \lor C_{\sigma}$ with the required properties. Notice that $C_{\pi_u} \lor C_{\sigma} = C_{\pi_u \setminus \pi_v} \lor C_{\pi_v} \lor C_{\sigma}$. We first construct a resolution refutation $\Pi'$ of $\{ C_{\pi_u \setminus \pi_v} : u \textnormal{ is a child of } v \}$ together with the clauses \eqref{refclause:V-dom} - \eqref{refclause:R-dom} such that the index-width of $\Pi'$ is 1. 
This is easy: since $\{ C_{\pi_u \setminus \pi_v} : u \textnormal{ is a child of } v \} = \{C_{\alpha} : \alpha \textnormal{ is respectful and mentions just the pair } (i,j) \}$, we use \eqref{refclause:V-dom}, \eqref{refclause:L-dom}, \eqref{refclause:R-dom} (resp. \eqref{refclause:I-dom} if $i=1$) to remove all the negated $V,L,R$-variables (resp. the negated $I$-variables) from the clauses $C_{\alpha}$, and we refute the resulting clauses by a refutation in the form of a complete binary tree to resolve all the $D$-variables.
Now, having $\Pi'$, we define $\Pi_G(\pi_v, \sigma)$ as follows: add the literals of $C_{\pi_v} \lor C_{\sigma}$ to each clause of $\Pi'$ other than an initial clause from \eqref{refclause:V-dom}, \eqref{refclause:L-dom}, \eqref{refclause:R-dom}, \eqref{refclause:I-dom}, and derive each initial clause $C_{\pi_u} \lor C_{\sigma}$ in the resulting derivation using the derivation $\Pi_G(\pi_u,\sigma)$. It is easy to see that $\Pi_G(\pi_v, \sigma)$ has the required properties. 
\end{proof}

We now turn our attention to the formula $\textnormal{R}^k\textnormal{REF}^{F}_{s,t}$. Recall from its definition in Section \ref{sec:refl-princ-upper-bound} that its variables are those of  $\textnormal{REF}^{F}_{s,t}$ together with variables $S_u(i,j)$, $(i,j) \in [s] \times [t]$, $u \in [k]$.
In the following definition we extend the notion of home pair from Definition \ref{def:home-pair-and-set-to} to the $S$-variables, and we extend the notion of a pair being mentioned accordingly.

\begin{definition}
For $(i,j) \in [s] \times [t]$ and $u \in [k]$, the \emph{home pair} of the variable $S_u(i,j)$ is $(i,j)$.

We say that a pair $(i,j)$ is \emph{mentioned in} a clause $E$ (resp. a partial assignment $\pi$; a term $q$) if it is a home pair of a variable a literal of which is in $E$ (resp. which is in $\dom(\pi)$; a literal of which is in $q$).
\end{definition}

\begin{definition}
Let $U \subseteq [s] \times [t]$ and let $G$ be a DNF in the variables of $\textnormal{R}^k\textnormal{REF}^{F}_{s,t}$. If for each term $q \in G$ there is $(i,j) \in U$ such that $(i,j)$ is mentioned in $q$, then we say that $U$ is an \emph{index-cover} of $G$. The \emph{index-covering number of} $G$, $c_{\textnormal{i}}(G)$, is the minimum cardinality of an index-cover of $G$.
\end{definition}

\begin{definition}
For a set $U \subseteq [s] \times [t]$, denote by $\text{Var}(U)$ the set of all variables of $\textnormal{REF}^{F}_{s,t}$ with home pair in $U$, that is, 
\[
 \text{Var}(U) := \bigcup_{(i,j) \in U} D(i,j,\cdot, \cdot) \cup \bigcup_{(i,j) \in U \setminus ([1] \times [t])} \left( R(i,j,\cdot) \cup L(i,j,\cdot) \cup V(i,j,\cdot) \right) \cup \bigcup_{(1,j) \in U} I(j,\cdot).
\]
Also, denote by $\text{Var}_S(U)$ the set of all $S$-variables with home pair in $U$; in symbols, $\text{Var}_S(U) := \{ S_u(i,j) : u \in [k], (i,j) \in U \}$.
\end{definition}

We generalize random restrictions from \cite{atserias-muller2019-focs} to our case of $\textnormal{R}^k\textnormal{REF}^{F}_{s,t}$.

\begin{definition}
\label{def:random_restriction}
A \emph{random restriction} $\rho_k$ is a partial assignment to the variables of $\textnormal{R}^k\textnormal{REF}^{F}_{s,t}$ given by the following experiment:
\begin{enumerate}
\item Independently for each $(i,j) \in [s] \times [t]$ and $u \in [k]$, map $S_u(i,j)$ to 0 or 1, each with probability $1/2$. 
\item Let $A$ be the set of those $(i,j) \in [s] \times [t]$ such that for every $u \in [k]$, $S_u(i,j)$ is mapped to 1.
\item Map independently each variable from $\text{Var}(([s] \times [t]) \setminus A)$ to 0 or 1, each with probability $1/2$.
\end{enumerate}
\end{definition}

\begin{theorem}
\label{thm:switching_a-DNFs_to_trees}
Suppose that $k \geq 1,a \geq 1$ are integers such that $k \geq a$. There is $\delta > 0$ and an integer $n_0 > 0$ such that if $n,r,s,t$ are integers satisfying 
\begin{equation}
\label{eq:conditions_n,r,t_in_switching_theorem} 
r \leq t \leq 2^{\delta n} \text{ and } n_0 \leq n,
\end{equation} 
and $F$ is a CNF with $r$ clauses in $n$ variables, then for every $a$-DNF $G$ in the variables of $\textnormal{R}^k\textnormal{REF}^{F}_{s,t}$ and every $w>0$, 
\begin{equation}
\label{eqn:probability_p_a_w}
\Pr[h_{\textnormal{i}}(G \! \restriction \! \rho_k) > w] \leq 2^{- \frac{w}{n^{a-1}} \gamma(a)},
\end{equation}
where $\gamma(a) = \frac{(\log e)^a}{2^{a^2 + 3a -2} a!}$.
\end{theorem}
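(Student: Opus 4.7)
I follow Razborov's encoding/decoding scheme for switching lemmas, adapted to the pair-based decision trees of Definition \ref{def:decision-tree-for-REF-F-s-t} and the two-stage random restriction $\rho_k$. The first step is to construct, for each $\rho_k$, a canonical decision tree $T^{\textnormal{can}}_{G,\rho_k}$ over $\textnormal{REF}^{F}_{s,t}$ that strongly represents $G \! \restriction \! \rho_k$: fix an enumeration of the terms of $G$, and at each internal node process the least-indexed surviving term $q$ by querying, in some canonical order, the pairs in $A$ mentioned in $q \! \restriction \! \rho_k$, branching on all respectful assignments to the variables with that home pair. Branches that falsify $q$ continue with the next surviving term, branches that satisfy $q$ close as $1$-leaves, and exhausting all terms produces $0$-leaves. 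Since $h_{\textnormal{i}}(G \! \restriction \! \rho_k)$ is at most the pair-height of $T^{\textnormal{can}}_{G,\rho_k}$, it suffices to bound the probability that the canonical tree contains a root-to-leaf path of pair-length greater than $w$.

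Suppose $\rho_k$ produces such a long path; canonically pick one witness $\pi$ and decompose it into chunks $P_1,\ldots,P_L$, one per surviving term visited, with chunk $P_\ell$ querying pairs mentioned in term $q_{j_\ell}$. Every pair $(i,j)$ on $\pi$ lies in $A$, so in $\rho_k$ all its $S$-variables are $1$ and all its non-$S$ variables are starred. Define $\rho'_k$ from $\rho_k$ by, at each such pair, flipping $S_1(i,j)$ to $0$ (so $(i,j)$ leaves the new $A$) and writing in the non-$S$ values dictated by the edge label of $\pi$ at that pair. Alongside $\rho'_k$ I record a witness $W$ consisting, chunk by chunk, of a bitmap of size at most $a$ identifying which literal positions within $q_{j_\ell}$ controlled the chunk's branching behavior (analogous to the data recorded in \cite{segerlind-buss-impagliazzo}). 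Injectivity of $\rho_k \mapsto (\rho'_k,W)$ is verified by a decoder that re-runs the canonical construction on $\rho'_k$ and, whenever it reaches a term indicated by $W$, uses the bitmap to locate the lifted pairs, resets $S_1$ to $1$, and releases the non-$S$ variables before continuing.

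The probabilistic bound comes from the ratio $\Pr[\rho_k]/\Pr[\rho'_k]$ accumulated over the $w' > w$ pairs visited on $\pi$: one flipped $S$-bit contributes a fixed factor per pair, and replacing the starred non-$S$ variables of a pair by specific values contributes a factor equal to the cardinality of respectful assignments for that pair. The number of valid witnesses $W$ compatible with a given pair-length sequence is at most $(c a)^{w'}$ for a constant $c$ absorbing the bitmap cost. Summing over $w' > w$ and using the hypotheses $r \leq t \leq 2^{\delta n}$ and $n \geq n_0$ to collapse the sum into a geometric series dominated by its first term yields the claimed $2^{- w \gamma(a)/n^{a-1}}$ after choosing $\delta$ small enough and $n_0$ large enough.

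The main obstacle is the pair-based encoding: a pair-query has exponentially many (in $n$ and $t$) outcomes, so naively storing which pair was lifted would cost $\Omega(\log(st))$ bits per pair and destroy any exponential savings. The workaround, already indicated above, is to let the decoder discover each lifted pair by simulation, so that $W$ only has to encode a $\log a$-bit choice of literal position within the currently processed term. A secondary subtlety is ensuring that the canonical construction on $\rho'_k$ visits the same sequence of surviving terms as on $\rho_k$ up through the end of $\pi$; this requires exploiting the independence of the modifications made at different pairs under $\rho_k$, and is where I expect the proof to diverge most noticeably from the argument of \cite{segerlind-buss-impagliazzo}. The $n^{a-1}$ factor in the exponent and the constants packed into $\gamma(a)$ ultimately reflect that each pair on $\pi$ carries $\Theta(\log n)$ bits of value information (e.g.\ through the $V$-variable) that must be paid for in the encoding and balanced against the $2^{-k}$ per-pair savings.
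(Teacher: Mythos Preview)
Your encoding approach has a fatal accounting error. When you flip $S_1(i,j)$ from $1$ to $0$ and write in the edge-label values for all non-$S$ variables with home pair $(i,j)$, you are fixing $2n + n + 2t$ (or $2n + r$) individual bits, each of which is independently uniform under $\rho_k$. Hence $\Pr[\rho'_k]/\Pr[\rho_k] = 2^{-(3n+2t)}$ per lifted pair, so the ratio $\Pr[\rho_k]/\Pr[\rho'_k]$ that enters the encoding bound is $2^{+(3n+2t)w'}$, not small. Flipping one $S$-bit contributes \emph{no} savings at all: any specific $k$-bit $S$-string has probability $2^{-k}$ both before and after. You seem to identify the per-pair cost with the \emph{number of respectful assignments} $2^{2n}nt^2$, but the random restriction of Definition~\ref{def:random_restriction} sets each non-$S$ bit independently and uniformly, so the probability of a specific respectful assignment is $2^{-(3n+2t)}$, not $(2^{2n}nt^2)^{-1}$. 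With a witness set of size $(ca)^{w'}$ your final bound is $(ca \cdot 2^{3n+2t})^{w'} \gg 1$, which is vacuous. There is no rescue by going the other direction (releasing set variables back to stars), since that destroys injectivity unless you store $\Theta((n+t)w')$ bits in $W$.

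The paper does not use an encoding argument at all. It follows the Segerlind--Buss--Impagliazzo scheme: induction on $a$, with a case split on the index-covering number $c_{\textnormal{i}}(G)$. If $c_{\textnormal{i}}(G)$ is large, one finds $c_{\textnormal{i}}(G)/a$ index-independent terms and $G\!\restriction\!\rho_k = 1$ with high probability. If $c_{\textnormal{i}}(G)$ is small (at most roughly $w\gamma(a-1)/(4n^{a-1})$), one conditions on which pairs of the minimal cover $U$ land in $A$, and for those that do, union-bounds over all respectful assignments to their variables; after this conditioning $G$ becomes an $(a{-}1)$-DNF and the induction hypothesis applies. The $(t^2 n 2^{2n})^{|U'|}$ union-bound factor \emph{is} paid, but it is controlled precisely because $|U'| \leq c_{\textnormal{i}}(G) = O(w/n^{a-1})$, so the exponent is $O(nw/n^{a-1}) = O(w/n^{a-2})$, which is then balanced against the inductive bound $p_{a-1}(w - c_{\textnormal{i}}(G))$. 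This recursive balancing is exactly what produces the $n^{a-1}$ in the final exponent; there is no per-pair $\Theta(\log n)$ cost as you suggest, but rather a one-time $\Theta(n)$ cost over a set of size $O(w/n^{a-1})$.
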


\begin{proof}
Denote the right hand side of the inequality \eqref{eqn:probability_p_a_w} by $p_a(w)$. Let $k \geq 1$ be given and denote $\rho := \rho_k$. We prove the theorem by induction on $a$.

Base case: $a = 1$. $G$ is a clause. If $c_{\text{i}}(G) \leq w$, then $\Pr[h_{\textnormal{i}}(G \! \restriction \! \rho) > w] = 0$ because we can build a decision tree strongly representing $G \! \restriction \! \rho$ by querying the pairs from the smallest index-cover of $G$. If $c_{\text{i}}(G) > w$, we have $\Pr[h_{\textnormal{i}}(G \! \restriction \! \rho) > w] \leq \Pr[G \! \restriction \! \rho \neq 1] \leq \left( 1 - (1-2^{-k})/2 \right)^{c_{\text{i}}(G)} \leq \left( 1 - 1/4 \right)^{c_{\text{i}}(G)}  \leq e^{- c_{\text{i}}(G)/4} = 2^{- c_{\text{i}}(G) \gamma(1)} \leq 2^{-w \gamma(1)}$.

Induction step: Assume the theorem holds for $a-1$, witnessed by $\delta(k,a-1)$ and $n_0(k,a-1)$. Find a positive $\delta(k,a) \leq \delta(k,a-1)$ and an integer $n_0(k,a) \geq n_0(k,a-1)$ such that
\begin{equation}
\label{eq:condition_on_delta_and_n0}
 - \frac{\gamma(a-1)}{2}n + \left(2 \log t + \log n + \frac{\gamma(a-1)}{n^{a-2}} \right) \cdot \frac{\gamma(a-1)}{4} \leq - \gamma(a)
\end{equation}
holds for any $n,r,t$ satisfying \ref{eq:conditions_n,r,t_in_switching_theorem} with $\delta(k,a)$ and $n_0(k,a)$ in place of $\delta$ and $n_0$, respectively. Let $G$ be an $a$-DNF, and let $U$ be an index cover of $G$ of size $c_{\textnormal{i}}(G)$. We distinguish two cases based on $c_{\textnormal{i}}(G)$.

Case 1: $c_{\textnormal{i}}(G) > \frac{w}{n^{a-1}} \cdot \frac{\gamma(a-1)}{4}$. In this case we want to show that $\rho$ satisfies $G$ with a high probability. To this end, note that there are at least $c_{\text{i}}(G)/a$ many terms in $G$ that are index-independent, that is, for no two of them there is a pair $(i,j) \in [s] \times [t]$ mentioned by both. (If every such set of terms was smaller than $c_{\text{i}}(G)/a$, take a maximal one and observe that the set of pairs mentioned by the terms forms an index-cover of $G$ of cardinality smaller than $c_{\text{i}}(G)$, a contradiction.) It is easy to see that each of these index-independent terms is satisfied by $\rho$ with independent probability at least $2^{-2a}$. Therefore,
\begin{align*}
  \Pr[h_{\text{i}}(G \! \restriction \! \rho) > w] 
& \leq \Pr[G \! \restriction \! \rho \neq 1] \leq \left( 1 - 2^{-2a} \right)^{c_{\text{i}}(G)/a} 
  \leq 2^{-\frac{(\log e)}{a 2^{2a}} c_{\text{i}}(G)} 
  \leq 2^{-\frac{(\log e)}{a 2^{2a}} \cdot \frac{w}{n^{a-1}} \cdot \frac{\gamma(a-1)}{4}} \\
& = 2^{ - \frac{w}{n^{a-1}}\gamma(a)}.
\end{align*}
This finishes the inductive step for Case 1. 

Case 2: $c_{\textnormal{i}}(G) \leq \frac{w}{n^{a-1}} \cdot \frac{\gamma(a-1)}{4}$. 
Let $U' \subseteq U$, and let $\nu : \text{Var}_S(U) \cup \text{Var}(U \setminus U') \rightarrow \{0,1\}$ satisfy the following conditions:
\begin{enumerate}[($\nu$1)]
\item \label{item:nu_1_on_U'} for each $(i,j) \in U'$ and each $u \in [k]$, $\nu(S_u(i,j)) = 1$, 
\item \label{item:nu_0_outside_U'} for each $(i,j) \in U \setminus U'$ there is $u \in [k]$ with $\nu(S_u(i,j)) = 0$.
\end{enumerate}  
We have
\begin{align*}
& \Pr[h_{\text{i}}(G \! \restriction \! \rho) > w \mid \rho \! \restriction \! \dom(\nu) = \nu]  \\
& \leq \Pr [ \exists \pi : \text{Var}(U') \rightarrow \{0,1\}, \pi \text{ is respectful} \land h_{\text{i}}((G \! \restriction \! \pi) \! \restriction \! \rho) > w - |U'| \mid \rho \! \restriction \! \dom(\nu) = \nu] \\
& \leq   \sum_{\substack{\pi : \text{Var}(U') \rightarrow \{0,1\}, \\ \pi \text{ is respectful}}} \Pr [h_{\text{i}}((G \! \restriction \! \pi) \! \restriction \! \rho) > w- |U'| \mid \rho \! \restriction \! \dom(\nu) = \nu] \\
& = \sum_{\substack{\pi : \text{Var}(U') \rightarrow \{0,1\}, \\ \pi \text{ is respectful}}} \Pr [h_{\text{i}}(((G \! \restriction \! \pi) \! \restriction \!  \nu) \! \restriction \! \rho) > w- |U'|] \\
& \leq \left(t^2 n 2^{2n} \right)^{|U'|} p_{a-1}(w - |U'|).
\end{align*}
Here the first inequality follows from Lemma \ref{lem:decision_tree_composition} and from $(G \! \restriction \! \pi) \! \restriction \! \rho = (G \! \restriction \! \rho) \! \restriction \! \pi$ (since $\dom(\pi) \cap \dom(\rho) = \emptyset$). The second inequality is obtained by the union bound. The equality follows since the events $h_{\text{i}}(((G \! \restriction \! \pi) \! \restriction \! \nu) \! \restriction \! \rho) > w- |U'|$ and $\rho \! \restriction \! \dom(\nu) = \nu$ are independent (by the definition of $\rho$). And the last inequality is by the induction hypothesis and by the upper bound $t^2n2^{2n} = \max \{t^2n2^{2n}, r2^{2n} \}$ (recall that $t \geq r$) over $(i,j) \in [s] \times [t]$ on the number of respectful partial assignments mentioning exactly the pair $(i,j)$.  

Since the event $A \cap U = U'$ (where the random variable $A$ is given by Definition \ref{def:random_restriction}) is the disjoint union of events $\rho \! \restriction \! \dom(\nu) = \nu$ over all $\nu$ satisfying conditions \ref{item:nu_1_on_U'} and \ref{item:nu_0_outside_U'}, the above calculation implies 
\begin{equation}
\label{eq:prob_representing_tree>w_conditioned_on_A_intersect_U_is_U'}
 \Pr[h_{\text{i}}(G \! \restriction \! \rho) > w \mid A \cap U = U'] \leq  \left(t^2 n 2^{2n} \right)^{|U'|} p_{a-1}(w - |U'|).
\end{equation}
Therefore,
\begin{align}
 \Pr[h_{\text{i}}(G \! \restriction \! \rho) > w] & = \sum_{U' \subseteq U} \Pr[h_{\text{i}}(G \! \restriction \! \rho) > w \land A \cap U = U'] \nonumber \\
& = \sum_{U' \subseteq U} \Pr[h_{\text{i}}(G \! \restriction \! \rho) > w \mid A \cap U = U'] \cdot \Pr[A \cap U = U'] \nonumber \\
& \leq \sum_{U' \subseteq U}  \left(t^2 n 2^{2n} \right)^{|U'|} p_{a-1}(w - |U'|) \cdot 2^{-k |U'|} \left( 1 - 2^{-k} \right)^{|U \setminus U'|}  \nonumber \\
& = \sum_{q = 0}^{c_{\text{i}}(G)}  \binom{c_{\text{i}}(G)}{q} \left(t^2 n 2^{2n} \right)^{q} p_{a-1}(w - q) \cdot 2^{-kq} \left( 1 - 2^{-k} \right)^{c_{\text{i}}(G) - q}  \nonumber \\
& \leq \left(t^2 n 2^{2n} \right)^{c_{\text{i}}(G)} p_{a-1}(w - c_{\text{i}}(G)). \label{eq:prob_representing_tree>w_interm_step}
\end{align}
Here the first inequality is by \ref{eq:prob_representing_tree>w_conditioned_on_A_intersect_U_is_U'} and by the definition of $\rho$. The second inequality follows from $\left(t^2 n 2^{2n} \right)^{q} p_{a-1}(w - q) \leq \left(t^2 n 2^{2n} \right)^{c_{\text{i}}(G)} p_{a-1}(w - c_{\text{i}}(G))$ for $q \leq c_{\text{i}}(G)$. From \ref{eq:prob_representing_tree>w_interm_step}, using the definition of $p_{a-1}(w - c_{\text{i}}(G))$ and the assumption $c_{\text{i}}(G) \leq \frac{w}{n^{a-1}} \cdot \frac{\gamma(a-1)}{4}$, we get
\begin{align*}
\log (\Pr[h_{\text{i}}(G \! \restriction \! \rho) > w]) & \leq \left( 2 \log t + \log n + 2n \right) c_{\text{i}}(G) - \frac{w - c_{\text{i}}(G)}{n^{a-2}} \gamma(a-1) \\
& = \left(2 \log t + \log n + 2n + \frac{\gamma(a-1)}{n^{a-2}} \right) c_{\text{i}}(G) - \frac{w \gamma(a-1)}{n^{a-2}} \\
& \leq \left(2 \log t + \log n + 2n + \frac{\gamma(a-1)}{n^{a-2}} \right) \frac{w}{n^{a-1}} \cdot \frac{\gamma(a-1)}{4} - \frac{w \gamma(a-1)}{n^{a-2}} \\
& = - \frac{w \gamma(a-1)}{2n^{a-2}} + \left(2 \log t + \log n + \frac{\gamma(a-1)}{n^{a-2}} \right) \frac{w}{n^{a-1}} \cdot \frac{\gamma(a-1)}{4} \\
& \leq  - \frac{w}{n^{a-1}} \gamma(a),
\end{align*}
where the last inequality is equivalent to \ref{eq:condition_on_delta_and_n0}. This finishes the inductive step for Case 2, and the proof of the theorem.
\end{proof}

We now show an index-width lower bound on resolution refutations of $\textnormal{REF}^{F}_{s,t}$ for an unsatisfiable $F$. This was done in \cite{atserias-muller2019-focs} for a non-layered version of the formula, of which our $\textnormal{REF}^{F}_{s,t}$ is a restriction, so the index-width lower bound we need does not immediately follow from that in \cite{atserias-muller2019-focs}. We provide a simpler proof for $\textnormal{REF}^{F}_{s,t}$. First a definition. 

\begin{definition} \label{def:admissible}
A partial assignment $\sigma$ to the variables of $\textnormal{REF}^{F}_{s,t}$ is called \emph{admissible} if it satisfies all the following conditions.
\begin{enumerate}[({A}1)]
\item \label{item:admissible-set-or-untouched} For each $(i,j) \in [s] \times [t]$, $D(i,j,\cdot,\cdot)$ (resp. $V(i,j,\cdot)$, $I(j,\cdot)$, $L(i,j,\cdot)$, $R(i,j,\cdot)$) either is set to some clause (resp. some $\ell \in [n]$, some $m \in [r]$, some $j' \in [t]$, some $j' \in [t]$) by $\sigma$ or contains no variable that is in $\dom(\sigma)$. 

\item \label{item:admissible-RL-implies-D} For each $(i,j) \in [s] \times [t]$, if $L(i,j,\cdot)$ or $R(i,j,\cdot)$ is set to some  $j' \in [t]$, then both $D(i,j,\cdot,\cdot)$ and $D(i-1,j',\cdot,\cdot)$ are set.

\item \label{item:admissible-D-iff-V,I} For each $(i,j) \in ([s] \setminus \{1\}) \times [t]$, $D(i,j,\cdot,\cdot)$ is set if and only if $V(i,j,\cdot)$ is set. For each $j \in [t]$,  $D(1,j,\cdot,\cdot)$ is set if and only if $I(j,\cdot)$ is set.

\item \label{item:admissible-D-nontaut-and-fat} For each $(i,j) \in [s] \times [t]$, if $D(i,j,\cdot,\cdot)$ is set to a clause $C_{i,j}$, then $C_{i,j}$ is non-tautological and has at least $\min\{s-i, n\}$ many literals. If $D(i,j,\cdot,\cdot)$ is set to a clause $C_{i,j}$ with less than $n$ literals and $V(i,j,\cdot)$ is set to some $\ell \in [n]$, then none of the literals of $x_\ell$ is in $C_{i,j}$.

\item \label{item:admissible-empty-clause} If $D(s,t, \cdot , \cdot)$ is set, it is set to the empty clause.

\item \label{item:admissible-I}For each $j \in [t]$, if $I(j,\cdot)$ is set, then $\sigma$ satisfies all clauses in \eqref{refclause:axioms} with this $j$.

\item \label{item:admissible-RL-cut-variable_and_RL-transf} For each $(i, j) \in ([s] \setminus \{1\}) \times [t]$, if $L(i,j,\cdot )$ (resp. $R(i,j,\cdot )$) is set, then $\sigma$ satisfies all clauses in \eqref{refclause:res-L-cut} and \eqref{refclause:res-L-transf} (resp. \eqref{refclause:res-R-cut} and \eqref{refclause:res-R-transf}) with this $(i,j)$ (i.e., those clauses that contain the literal $\neg L(i,j,j')$ (resp. $\neg R(i,j,j')$) for some $j' \in [t]$).
\end{enumerate}
\end{definition}

\begin{theorem}
\label{thm:width_lb}
Let $w>0$. If $n,r,s,t$ are integers satisfying 
\begin{equation}
\label{eq:conditions_on_n,r,s,t,w_in_width_lb_theorem} 
2 \leq n+1 \leq s, \quad 2w < t,
\end{equation}
and $F$ is an unsatisfiable CNF consisting of $r$ clauses $C_1,\ldots, C_r$ in $n$ variables $x_1,\ldots, x_n$, then any resolution refutation of $\textnormal{REF}^{F}_{s,t}$ has index-width greater than $w$.
\end{theorem}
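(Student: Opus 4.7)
The plan is to argue by contradiction via an adversary walk over the alleged refutation. Suppose $\Pi$ is a resolution refutation of $\textnormal{REF}^{F}_{s,t}$ in which every clause has index-width at most $w$. We construct an admissible partial assignment $\sigma$ (in the sense of Definition \ref{def:admissible}) together with a walk through the DAG of $\Pi$ starting from the empty clause and moving backward to parents, maintaining the invariant that $\sigma$ is admissible, falsifies the current clause, and mentions only pairs occurring in that clause (so at most $w$ pairs are mentioned in $\sigma$). When the walk reaches an initial clause of $\textnormal{REF}^{F}_{s,t}$ we obtain a contradiction: conditions (A\ref{item:admissible-set-or-untouched})--(A\ref{item:admissible-empty-clause}) together with (A\ref{item:admissible-RL-cut-variable_and_RL-transf}) imply that an admissible $\sigma$ cannot falsify any of the ``structural'' clauses \eqref{refclause:nontaut}--\eqref{refclause:R-func}, and (A\ref{item:admissible-I}) combined with the unsatisfiability of $F$ rules out falsifying the axiom clauses \eqref{refclause:axioms}.

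At each walk step let $C$ be derived from parents $C^L, C^R$ by resolving on a variable $x$ with home pair $(i^*, j^*)$. If $(i^*, j^*)$ is already mentioned in $\sigma$, then $\sigma$ decides $x$ and we move to the parent consistent with that value, restricting $\sigma$ to the pairs of the new current clause when necessary. Otherwise $\sigma$ cannot have falsified any literal at $(i^*, j^*)$, so $C$ contains no literal at $(i^*, j^*)$; the parent to which we move therefore has $x$ as its only literal at $(i^*, j^*)$, while its remaining literals lie in $C$ and are already false under $\sigma$. It thus suffices to extend $\sigma$ admissibly at $(i^*, j^*)$ so as to falsify $x$ (or, if admissibility obstructs this, to falsify $\neg x$ and move to the other parent instead).

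The extension sets only the groups $D(i^*, j^*, \cdot, \cdot)$ and $V(i^*, j^*, \cdot)$ (together with $I(j^*, \cdot)$ when $i^* = 1$), leaving $L$ and $R$ at $(i^*, j^*)$ unset whenever possible so that (A\ref{item:admissible-RL-implies-D}) and (A\ref{item:admissible-RL-cut-variable_and_RL-transf}) remain vacuous at that pair. The clause $C_{i^*, j^*}$ is chosen non-tautological of the maximum width $n$ (which voids the auxiliary side condition in (A\ref{item:admissible-D-nontaut-and-fat})), with signs picked to falsify the desired literal. For $i^* = 1$ the hypothesis $s \geq n + 1$ forces $|C_{1, j^*}| = n$, so $C_{1, j^*}$ encodes a complete Boolean assignment; by unsatisfiability of $F$ the complementary assignment falsifies some clause $C_m \in F$, i.e.\ $C_m \subseteq C_{1, j^*}$, and setting $I(j^*, \cdot) := m$ makes (A\ref{item:admissible-I}) hold. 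When the resolved variable $x$ is instead an $L$- or $R$-literal whose falsification forces $L$ or $R$ at $(i^*, j^*)$ to some value $j'$, condition (A\ref{item:admissible-RL-implies-D}) requires $(i^* - 1, j')$ to also be mentioned in $\sigma$; we pick $j'$ among the at least $t - 2w > 0$ indices at level $i^* - 1$ unused in $\sigma$ (using $2w < t$ together with the size invariant) and coordinate the choice of $C_{i^* - 1, j'}$, $V(i^*, j^*, \cdot)$, and $C_{i^*, j^*}$ so as to satisfy \eqref{refclause:res-L-cut}--\eqref{refclause:res-R-transf} per (A\ref{item:admissible-RL-cut-variable_and_RL-transf}).

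The main technical obstacle will be preserving the admissibility invariant across walk steps --- in particular, gracefully handling the occasional need to drop pairs (when moving to parents with strictly fewer mentioned pairs) that may still be referenced by $L$ or $R$ values set elsewhere in $\sigma$, without breaking (A\ref{item:admissible-RL-implies-D}). The strategy is to exploit both slacks afforded by the hypotheses: $n + 1 \leq s$ caps the required widths of admissible clauses by $n$, leaving the maximum-width non-tautological choice always available in (A\ref{item:admissible-D-nontaut-and-fat}); and $2w < t$ guarantees fresh premise indices at every level, so any dropped auxiliary pair can be reconstructed on demand when it is needed again.
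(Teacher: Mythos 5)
Your high-level approach matches the paper's: both run an adversary argument over the refutation DAG, maintaining an admissible partial assignment (Definition \ref{def:admissible}) that falsifies the current clause, and both reach a contradiction because no admissible assignment can falsify an axiom of $\textnormal{REF}^F_{s,t}$. However, there is a genuine gap in the invariant you maintain and in how you transition between steps.

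The invariant ``$\sigma$ mentions only pairs occurring in the current clause (so at most $w$ pairs)'' cannot hold. If the current clause $E$ contains a literal of an $L$- or $R$-variable with home pair $(i,j)$, any admissible $\sigma$ falsifying $E$ must, by (A\ref{item:admissible-set-or-untouched}), set $L(i,j,\cdot)$ (or $R(i,j,\cdot)$) to some $j'$, and then (A\ref{item:admissible-RL-implies-D}) forces $D(i-1,j',\cdot,\cdot)$ to be set, so $\sigma$ mentions $(i-1,j')$ even though $E$ need not. You implicitly acknowledge this when you later invoke ``$t - 2w$'' rather than ``$t - w$'', but that count is inconsistent with the stated invariant, and your proposed patch (``any dropped auxiliary pair can be reconstructed on demand'') does not address the real problem: naively restricting $\sigma$ to the pairs of the new current clause can break (A\ref{item:admissible-RL-implies-D}), so the restricted assignment may simply not be admissible. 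The paper's resolution is different and cleaner: it does not carry a single $\sigma$ across the walk at all. At each step of the induction it chooses afresh a \emph{minimal} admissible assignment falsifying the current clause, and from minimality plus the index-width bound $w$ it derives the precise estimate $|\{j' : D(i-1,j',\cdot,\cdot) \text{ set}\}| \leq 2w$ (combining at most $w$ pairs at level $i-1$ directly in $E$ with at most $2$ forced pairs per level-$i$ pair of $E$, so $a+2b \leq 2(a+b) \leq 2w$). With minimality in place there is no ``dropping'' or ``reconstruction'' to manage; the issue you flag as the main technical obstacle simply disappears.

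A smaller omission: when the resolved variable has home pair $(s,t)$ your recipe of choosing $C_{i^*,j^*}$ to be a full-width non-tautological clause with freely picked signs conflicts with (A\ref{item:admissible-empty-clause}), which forces $D(s,t,\cdot,\cdot)$ to be set to the empty clause. This case needs to be handled separately (as the paper does), and one then has no freedom in the sign of a $D(s,t,\ell,b)$ literal; it is always set to $0$, which still determines which parent to move to.

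To repair your proposal: replace the ``restrict $\sigma$ to the pairs of the new clause'' step by ``replace $\sigma$ by any minimal admissible assignment falsifying the new clause'', and replace the global bound ``$\sigma$ mentions at most $w$ pairs'' by the local bound on level $i-1$ derived from minimality. With those two changes your argument becomes essentially the paper's.
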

 
\begin{proof}
Assume for a contradiction that there is a resolution refutation $\Pi$ of $\textnormal{REF}^{F}_{s,t}$ of index-width at most $w$. We will show that if there is an admissible partial assignment falsifying a clause $E$ in $\Pi$ obtained by the resolution rule from $E_0$ and $E_1$, then there is an admissible partial assignment falsifying either $E_0$ or $E_1$. This immediately (by induction) leads to a contradiction, since the empty assignment is admissible and falsifies the last (empty) clause in $\Pi$, and, by definition, no partial admissible assignment falsifies any clause of $\textnormal{REF}^{F}_{s,t}$.

Let then $\sigma$ be an admissible partial assignment falsifying a clause $E$ in $\Pi$. Without loss of generality, assume that $\sigma$ is a minimal (with respect to inclusion) admissible partial assignment with this property. 

Let $Q$ be the variable resolved on to obtain $E$ from $E_0$ and $E_1$. If $Q \in \dom(\sigma)$, then $\sigma$ already falsifies either $E_0$ or $E_1$. So assume that $Q \not \in \dom(\sigma)$.
We consider two cases.

Case 1. Suppose that for some $(i,j) \in [s] \times [t]$, $Q \in D(i,j,\cdot, \cdot)$ or $Q \in V(i,j,\cdot)$ (resp. $Q \in I(j,\cdot)$ and $i = 1$). Note that by \ref{item:admissible-set-or-untouched}, \ref{item:admissible-RL-implies-D}, and \ref{item:admissible-D-iff-V,I}, no variable from $D(i,j,\cdot, \cdot) \cup V(i,j,\cdot) \cup L(i,j,\cdot) \cup R(i,j,\cdot)$ (resp. $D(1,j,\cdot, \cdot) \cup I(j,\cdot)$) is in $\dom(\sigma)$, and, moreoever, for any $j' \in [t]$, it is not the case that $L(i+1,j',\cdot)$ or $R(i+1,j',\cdot)$ is set to $j$ by $\sigma$. Therefore, we can extend $\sigma$ to a partial assignment $\sigma'$ as follows. Set $D(i,j,\cdot, \cdot)$ to any non-tautological clause containing $n$ literals, unless $(i,j) = (s,t)$, in which case set $D(i,j,\cdot, \cdot)$ to the empty clause. In case $i \geq 2$, set $V(i,j,\cdot)$ to an arbitrary value $\ell \in [n]$; in case $i=1$, set $I(j,\cdot)$ to any $m \in [r]$ such that the clause $C_m$ is a subset of the clause to which we have set $D(1,j,\cdot, \cdot)$. (Here we use that $F$ is unsatisfiable.) It is straightforward to check that $\sigma'$ is admissible. Since $Q \in \dom(\sigma')$, $\sigma'$ falsifies $E \cup \{Q^{1-\sigma'(Q)}\}$, of which either $E_0$ or $E_1$ is a subset.   

Case 2. Suppose that for some $(i,j) \in ([s] \setminus \{1\}) \times [t]$, $Q \in L(i,j, \cdot)$ (if $Q \in R(i,j, \cdot)$, we proceed in a completely analogous way). We may assume that $D(i,j,\cdot, \cdot)$ is set to some clause $C_{i,j}$ by $\sigma$ and $V(i,j,\cdot)$ is set to some $\ell \in [n]$ by $\sigma$; if not, set them both as described in Case 1. We now concentrate on the level $i-1$. Since the index-width of $E$ is at most $w$ and $\sigma$ is a minimal admissible partial assignment falsifying $E$, 
\begin{equation}
\label{eq:bad_indices_on_previous_level_upper_bound}
 \left| \{j': D(i-1,j', \cdot, \cdot) \text{ is set by }  \sigma  \} \right| \leq 2w.
\end{equation}
This is because $D(i-1,j', \cdot, \cdot)$ can be set by $\sigma$ for two reasons: either $(i-1,j')$ is mentioned in $E$ (which, together with \ref{item:admissible-RL-implies-D} and \ref{item:admissible-D-iff-V,I}, implies that $D(i-1,j', \cdot, \cdot)$ is set by $\sigma$) or there is some $j'' \in [t]$ such that a literal of a variable from $L(i,j'', \cdot)$ or $R(i,j'', \cdot)$ is in $E$ (which forces $\sigma$ to set $L(i,j'', \cdot)$ or $R(i,j'', \cdot)$, respectively, in order to falsify the literal) and $\sigma$ happens to set $L(i,j'', \cdot)$ or $R(i,j'', \cdot)$, respectively, to $j'$ (and therefore by \ref{item:admissible-RL-implies-D} $D(i-1,j', \cdot, \cdot)$ must be set by $\sigma$ too).

We extend $\sigma$ to a partial assignment $\sigma'$ as follows. Set $L(i,j, \cdot)$ to any $j'$ that is not from the set in \eqref{eq:bad_indices_on_previous_level_upper_bound}. Such $j'$ exists because $2w < t$. Thanks to that, set $D(i-1,j',\cdot, \cdot)$ to the clause $C_{i-1,j'} := (C_{i,j} \setminus \{\neg x_{\ell}\} ) \cup \{x_{\ell}\}$, where $C_{i,j}$ and $\ell$ are as above. Finally, if $i \in \{3,\ldots, s\}$, then either $C_{i-1,j'}$ has less than $n$ literals and we set $V(i-1,j',\cdot)$ to any $\ell' \in [n]$ such that no literal of $x_{\ell'}$ is in $C_{i-1,j'}$, or $C_{i-1,j'}$ has $n$ literals, in which case we set $V(i-1,j',\cdot)$ arbitrarily. If $i=2$, then by \ref{item:admissible-D-nontaut-and-fat}, \eqref{eq:conditions_on_n,r,s,t,w_in_width_lb_theorem}, and the definition of $C_{i-1,j'}$ we know that $C_{i-1,j'}$ has $n$ literals, and we set $I(j',\cdot)$ to any $m \in [r]$ such that $C_m \subseteq C_{i-1,j'}$. (Here we use that $F$ is unsatisfiable.) This finishes the definition of $\sigma'$. 

It is again easy to verify that $\sigma'$ is admissible. Because $Q \in \dom(\sigma')$, $\sigma'$ falsifies $E \cup \{Q^{1-\sigma'(Q)}\}$, of which one of $E_0$, $E_1$ is a subset. 
\end{proof}	

We now put together all the results so far in this section to show a length lower bound on $\textnormal{Res}(k)$ refutations of $\textnormal{R}^k\textnormal{REF}^{F}_{s,t}$ with an unsatisfiable $F$. 

\begin{theorem}
\label{thm:main_size_lb_for_RkREF^F_st}
Suppose $k \geq 1$ is an integer. There is $\delta > 0$ and an integer $n_0 > 0$ such that if $n,r,s,t$ are integers satisfying 
\begin{equation}
\label{eq:conditions_on_n,r,s,t_in_the_main_theorem} 
n_0 \leq n, \quad n+1 \leq s \leq t, \quad r \leq t \leq 2^{\delta n}, \quad n^k \leq t,
\end{equation}
and $F$ is an unsatisfiable CNF consisting of $r$ clauses $C_1,\ldots, C_r$ in $n$ variables $x_1,\ldots, x_n$, then any $\text{Res}(k)$ refutation of $\textnormal{R}^k\textnormal{REF}^{F}_{s,t}$ has length greater than $2^{\beta(k) \frac{t}{n^{k-1}}}$, where $\beta(k) := \frac{(\log e)^k}{2^{k^2 + 4k + 4} k!}$. 
\end{theorem}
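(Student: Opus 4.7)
The plan is a restriction argument that combines the three tools of this section: the switching lemma (Theorem \ref{thm:switching_a-DNFs_to_trees}), the decision-trees-to-small-index-width conversion (Theorem \ref{thm:shallow_trees_to_small_width_resolution}), and the index-width lower bound for $\textnormal{REF}^F_{s,t'}$ (Theorem \ref{thm:width_lb}). Suppose toward a contradiction that $\Pi$ is a $\textnormal{Res}(k)$ refutation of $\textnormal{R}^k\textnormal{REF}^F_{s,t}$ of length at most $L := 2^{\beta(k) t / n^{k-1}}$. I will set $w := \lfloor t / (24 \cdot 2^k) \rfloor$ and $t' := 6w + 3$, apply a random restriction $\rho_k$ from Definition \ref{def:random_restriction}, and argue that with positive probability the resulting restricted refutation can (after a further restriction) be turned into a resolution refutation of $\textnormal{REF}^F_{s, t'}$ of index-width at most $3w$. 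Since $2 \cdot 3w < t'$ and $s \geq n+1$, this contradicts Theorem \ref{thm:width_lb}.

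Let $A$ denote the random set of selected pairs. I consider two events: (i) every line of $\Pi \! \restriction \! \rho_k$ has representation index-height at most $w$; and (ii) $(s,t) \in A$ and $|\{j : (i,j) \in A\}| \geq t'$ for every $i \in [s]$. The probability of failure of (i) is, by Theorem \ref{thm:switching_a-DNFs_to_trees} applied with $a = k$ and a union bound over the $L$ lines, at most $L \cdot 2^{-w \gamma(k)/n^{k-1}}$, and since $\beta(k)/\gamma(k) = 2^{-(k+6)}$ and $t \geq n^k$, this is much smaller than $2^{-k-1}$ for large enough $n$. For (ii): each pair lies in $A$ independently with probability $2^{-k}$, so $\Pr[(s,t) \in A] = 2^{-k}$, and by a Chernoff tail estimate together with a union bound over the $s \leq t \leq 2^{\delta n}$ levels (which stays negligible since $t \cdot 2^{-k} \geq n$ and $\delta$ is small), $\Pr[(\text{ii})] \geq 2^{-k-1}$. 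Thus (i) and (ii) hold simultaneously with positive probability, and I fix an outcome of $\rho_k$ satisfying both.

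Set $H := \textnormal{R}^k\textnormal{REF}^F_{s,t} \! \restriction \! \rho_k$. Under this outcome, $\Pi \! \restriction \! \rho_k$ is a $\textnormal{Res}(k)$ refutation of $H$ in which every line has representation index-height at most $w$, and every initial clause of $H$ has index-width at most $2 \leq w$. Theorem \ref{thm:shallow_trees_to_small_width_resolution} then produces a resolution refutation of $H$ together with the functionality clauses \eqref{refclause:V-dom}--\eqref{refclause:R-func} of index-width at most $3w$. Next, I pick subsets $B_i \subseteq \{j : (i,j) \in A\}$ with $|B_i| = t'$ and $t \in B_s$, relabel the elements of each $B_i$ as $1, \ldots, t'$, and apply a further restriction $\tau$ (compatible with $\rho_k$) that (a) sets $L(i,j,j') = R(i,j,j') = 0$ whenever $j \in B_i$ and $j' \in \{j'' : (i-1, j'') \in A\} \setminus B_{i-1}$, and (b) assigns the remaining unset variables (those with home pair in $A \setminus B$) to fixed dummy values chosen so that every axiom of $H$ in which they appear is trivially satisfied. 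Because clauses \eqref{rrefclause:L-unused-to-zero}--\eqref{rrefclause:R-unused-to-zero} have already confined the $L,R$-pointers from $B$ to land inside $A$, the restriction $\tau$ further confines them to $B$, so the domain axioms \eqref{rrefclause:L-dom}--\eqref{rrefclause:R-dom} become the domain axioms of $\textnormal{REF}^F_{s, t'}$; a clause-by-clause inspection shows that every other axiom of $\textnormal{REF}^F_{s, t'}$ (with the pair $(s,t)$ playing the role of $(s,t')$) also appears in $(H \cup \{\text{functionality clauses}\}) \! \restriction \! \tau$. Applying $\tau$ to the index-width-$3w$ refutation therefore yields a resolution refutation of $\textnormal{REF}^F_{s, t'}$ of index-width at most $3w$, contradicting Theorem \ref{thm:width_lb}.

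The principal obstacle is the bookkeeping in this final embedding step: one must verify that every axiom of $\textnormal{REF}^F_{s, t'}$ genuinely arises, after relabeling, from some clause of $\textnormal{R}^k\textnormal{REF}^F_{s,t} \! \restriction \! (\rho_k \cup \tau)$, and that the dummy values in $\tau$ can be chosen consistently so as to satisfy all axioms mentioning pairs in $A \setminus B$ without interfering with those mentioning pairs in $B$. The explicit value $\beta(k) = \gamma(k)/2^{k+6}$ is calibrated exactly so that the switching-lemma failure bound $L \cdot 2^{-w\gamma(k)/n^{k-1}}$ and the Chernoff/selection bound in (ii) are both smaller than some absolute constant; verifying this reduces to routine arithmetic once the overall strategy is in place.
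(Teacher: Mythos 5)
Your overall strategy is the same as the paper's: restrict by $\rho_k$, use the switching lemma to get shallow decision trees, convert to a low-index-width resolution refutation via Theorem~\ref{thm:shallow_trees_to_small_width_resolution}, and contradict Theorem~\ref{thm:width_lb}. The one genuine structural difference is the order of operations: the paper first restricts the formula down to an exact (reindexed) copy of $\textnormal{REF}^F_{s,t'}$ with two auxiliary restrictions $\nu$ and $\lambda$ and only \emph{then} applies Theorem~\ref{thm:shallow_trees_to_small_width_resolution}, while you apply Theorem~\ref{thm:shallow_trees_to_small_width_resolution} to $\Pi\!\restriction\!\rho_k$ as a refutation of $H$ and only afterwards restrict the resulting resolution refutation by $\tau$. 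That order can be made to work and even sidesteps the paper's explicit conversion of decision trees over $\textnormal{REF}^F_{s,t}$ into decision trees over $\textnormal{REF}^F_{s,t'}$. However, as written, your proof has two concrete gaps, both in the ``final embedding'' step that you yourself flag as the principal obstacle.

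First, your restriction $\tau$ is incomplete. Part (a) sets $L(i,j,j')=R(i,j,j')=0$ for $j\in B_i$ only when $(i-1,j')\in A\setminus B_{i-1}$; it leaves untouched the pointers $L(i,j,j')$ with $(i-1,j')\notin A$. Consequently, the domain clause $L(i,j,1)\lor\cdots\lor L(i,j,t)$ (coming from \eqref{rrefclause:L-dom} for $j\in B_i$, or from the functionality clause \eqref{refclause:L-dom} used inside the Theorem~\ref{thm:shallow_trees_to_small_width_resolution} conversion) does \emph{not} become the corresponding domain clause of $\textnormal{REF}^F_{s,t'}$ after $\tau$: literals $L(i,j,j')$ with $(i-1,j')\notin A$ survive. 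Your justification --- ``clauses \eqref{rrefclause:L-unused-to-zero}--\eqref{rrefclause:R-unused-to-zero} have already confined the $L,R$-pointers from $B$ to land inside $A$'' --- conflates a unit clause being present among the axioms of $H$ with the corresponding variable being set by a restriction; the presence of $\neg L(i,j,j')$ as an axiom does not syntactically shorten the domain clause. The fix is simply to have $\tau$ set $L(i,j,j')=R(i,j,j')=0$ for every $j\in B_i$ and every $j'$ with $(i-1,j')\notin B_{i-1}$ (this is exactly what the paper's $\lambda$ does), and correspondingly drop the ``confinement'' argument.

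Second, event (ii) only asks for $|\{j:(i,j)\in A\}|\geq t'$ per level, and you then choose $|B_i|=t'$, so $A\setminus B$ can be empty on some level $i-1$. But the dummy assignment in part (b) of $\tau$ requires, for every $(i,j)\in A\setminus B$, pointers $L(i,j,\cdot)$ and $R(i,j,\cdot)$ that land on two \emph{distinct} pairs of $A\setminus B$ at level $i-1$ (distinct because a single premise clause would have to contain both $x_\ell$ and $\neg x_\ell$, contradicting \eqref{rrefclause:nontaut}). If level $i-1$ has $A\setminus B=\emptyset$ while level $i$ does not, no consistent choice of dummy values exists, and there is no way to ``trivially satisfy'' the axioms \eqref{rrefclause:res-L-cut}, \eqref{rrefclause:res-R-cut}. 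You should require in (ii) that $|A\cap(\{i\}\times[t])|\geq t'+2$ for every $i$; your Chernoff estimate already gives a much stronger bound of order $t/2^{k+1}$, so this is not a probabilistic obstruction, but it must be stated. This is precisely why the paper selects $t'=\lfloor t/2^{k+1}\rfloor -2$.

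Once these two points are repaired, the rest of your argument (the calibration of $\beta(k)=\gamma(k)/2^{k+6}$, the union bound over the $L$ lines, the positive-probability intersection of the events, and the numerical inequality $6w<t'$) is sound and matches the paper's.
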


\begin{proof}
Let $k \geq 1$ be given. Take $\delta$ and $n_0$ as given by Theorem \ref{thm:switching_a-DNFs_to_trees} for $a=k$. If necessary, increase $n_0$ so that it satisfies 
\begin{equation}
\label{eq:condition_on_n_0_in_main_thm}
\beta(k) n_0 > k + 1.
\end{equation}
Let $n,r,s,t$ be integers satisfying \eqref{eq:conditions_on_n,r,s,t_in_the_main_theorem}, and let $F$ satisfy the hypothesis of the theorem. 
Assume for a contradiction that there is a $\text{Res}(k)$ refutation $\Pi$ of $\textnormal{R}^k\textnormal{REF}^{F}_{s,t}$ of length at most $2^{\beta(k) \frac{t}{n^{k-1}}}$. 

Recall the random variable $A$ from Definition \ref{def:random_restriction}. We have that with probability  $2^{-k}$, 
\begin{enumerate}[(a)]
\item \label{item:s,t_in_A-in_main_theorem} $(s,t) \in A$.
\end{enumerate}

By the Chernoff bound and the union bound, with probability at least $1 - se^{- t2^{-k} / 8} $, 
\begin{enumerate}[(b)]
\item \label{item:A_widely_intersects_levels-in_main_thereom} for each $i \in [s]$ the cardinality of $A \cap (\{i\} \times [t])$ is at least $t/2^{k+1}$. 
\end{enumerate}
We have 
\begin{equation*}
se^{- t2^{-k} / 8} = 2^{\log s - \frac{t \log e}{2^{k+3}}} \leq 2^{\log n_0 - \frac{n_0 \log e}{2^{k+3}}}  < 2^{-(k+1)},
\end{equation*}
where we used $s \leq t$, $n_0 \leq s$ (from \eqref{eq:conditions_on_n,r,s,t_in_the_main_theorem}), and \eqref{eq:condition_on_n_0_in_main_thm}.

By Theorem \ref{thm:switching_a-DNFs_to_trees} and the union bound, with probability at least $1 - |\Pi| \cdot 2^{- \frac{t}{n^{k-1}2^{k+5}} \gamma(k)}$, 
\begin{enumerate}[(c)]
\item \label{item:all_lines_to_trees-in_main_theorem} for every line $G$ in $\Pi$, $h_{\text{i}}(G \! \restriction \! \rho_k) \leq t/2^{k+5}$. 
\end{enumerate}
We have 
\begin{equation*}
 |\Pi| \cdot 2^{- \frac{t}{n^{k-1}2^{k+5}}\gamma(k)} 
 \leq 2^{\beta(k) \frac{t}{n^{k-1}}} \cdot 2^{- \frac{t}{n^{k-1}2^{k+5}} \gamma(k)} 
 = 2^{ - \beta(k) \frac{t}{n^{k-1}}} 
\leq 2^{ - \beta(k) n_0} < 2^{-(k+1)},
\end{equation*}
where we used $n^k \leq t$, $n_0 \leq n$ (from \eqref{eq:conditions_on_n,r,s,t_in_the_main_theorem}), and \eqref{eq:condition_on_n_0_in_main_thm}.

It follows that there exists $\rho_k$ such that \ref{item:s,t_in_A-in_main_theorem}, \ref{item:A_widely_intersects_levels-in_main_thereom} and \ref{item:all_lines_to_trees-in_main_theorem} hold. Fix any such $\rho_k$ and denote it by $\rho$. We now restrict $\textnormal{R}^k\textnormal{REF}^{F}_{s,t} \! \restriction \! \rho$ some more before we apply Theorem \ref{thm:shallow_trees_to_small_width_resolution}.

For each level $i \in [s]$ select any $t' := \lfloor t/2^{k+1} \rfloor - 2$ home pairs $(i,j)$ of variables of $\textnormal{R}^k\textnormal{REF}^{F}_{s,t} \! \restriction \! \rho$ (they exist thanks to  \ref{item:A_widely_intersects_levels-in_main_thereom}), making sure to include the pair $(s,t)$ in the selection. Denote the set of selected pairs by $B$. Define a partial assignment $\nu: \text{Var}(\textnormal{R}^k\textnormal{REF}^{F}_{s,t} \! \restriction \! \rho) \rightarrow \{0,1\}$ by mapping all the variables with not selected home pairs so that they form an arbitrary resolution derivation from $F$, that is, so that $\nu$ satisfies every clause of $\textnormal{R}^k\textnormal{REF}^{F}_{s,t} \! \restriction \! \rho$ that contains a literal of a variable in $\dom(\nu)$. (This derivation may require two clauses per level, which is why we selected only $\lfloor t/2^{k+1} \rfloor - 2$ on each level.) Note that $\nu$ is respectful. Hence by \ref{item:all_lines_to_trees-in_main_theorem} and Lemma \ref{lem:decision_tree_restriction_by_respectful} we have that for any line $G$ in $\Pi \! \restriction \! \rho$, $h_{\text{i}}(G \! \restriction \! \nu) \leq t/2^{k+5}$.

Next, define a partial assignment $\lambda$ as follows. For every $(i,j) \in B \setminus (\{1\} \times [t])$ and every $j' \in [t]$ such that $(i-1,j') \not \in B$, map both $L(i,j,j')$ and $R(i,j,j')$ to 0. 
Let us verify that $((\textnormal{R}^k\textnormal{REF}^{F}_{s,t} \! \restriction \! \rho) \! \restriction \! \nu) \! \restriction \! \lambda$ is $\textnormal{REF}^{F}_{s,t'}$ up to a re-indexing of variables determined by a bijection that maps, for each $i \in [s]$, the elements of $B \cap (\{i\} \times [t])$ to $(i,1),\ldots, (i,t')$. Thanks to \ref{item:s,t_in_A-in_main_theorem}, clauses \eqref{rrefclause:Sst-true} are satisfied by $\rho$. All clauses \eqref{rrefclause:L-unused-to-zero} and \eqref{rrefclause:R-unused-to-zero} are satisfied: if $(i,j) \in B$ and $(i-1,j') \notin B$, then the clause is satisfied by $\lambda$, otherwise it is satisfied by $\rho$ or $\nu$. Clauses \eqref{rrefclause:axioms} - \eqref{rrefclause:R-func} with $(i,j) \notin B$ are satisfied either by $\rho$ (if $(i,j) \notin A$) or by $\nu$. Clauses \eqref{rrefclause:axioms} - \eqref{rrefclause:R-func} with $(i,j) \in B$ become, after removing those clauses \eqref{rrefclause:res-L-cut} - \eqref{rrefclause:res-R-transf} that are satisfied by $\lambda$ and after the re-indexing of variables, the clauses \eqref{refclause:axioms} - \eqref{refclause:R-func} with $t$ replaced by $t'$. (Here notice that clauses \eqref{rrefclause:empty-clause} become \eqref{refclause:empty-clause} thanks to $(s,t) \in B$.) Hence $((\textnormal{R}^k\textnormal{REF}^{F}_{s,t} \! \restriction \! \rho) \! \restriction \! \nu) \! \restriction \! \lambda$ is indeed $\textnormal{REF}^{F}_{s,t'}$ up to the re-indexing of variables. 

Let us now show that for a line $G$ in $(\Pi \! \restriction \! \rho) \! \restriction \! \nu$ we have that $G \! \restriction \! \lambda$ is, after the re-indexing of variables, strongly represented by a decision tree over $\text{REF}^F_{s,t'}$ of height at most $t/2^{k+5}$. As we already verified, $h_{\text{i}}(G) \leq t/2^{k+5}$, and therefore there is a tree $T$ over $\text{REF}^F_{s,t}$ of minimum height which strongly represents $G$ and whose height is at most $t/2^{k+5}$. Define a tree $T \! \restriction \! \lambda$ by deleting all edges (and the corresponding subtrees) in $T$ whose label is of the form $(C_{i,j}, \ell, j', j'')$ with $(i-1,j') \notin B$ or $(i-1,j'') \notin B$. $T \! \restriction \! \lambda$ is, after relabelling its nodes and edges according to the re-indexing bijection, a decision tree over $\textnormal{REF}^{F}_{s,t'}$. With every branch $\pi$ of $T \! \restriction \! \lambda$ we associate a partial assignment $\pi_{T \restriction \lambda}: \text{Var}(((\textnormal{R}^k\textnormal{REF}^{F}_{s,t} \! \restriction \! \rho) \! \restriction \! \nu) \! \restriction \! \lambda) \rightarrow \{0,1\}$ defined via the re-indexing bijection and Definition \ref{def:decision-tree-for-REF-F-s-t}, understanding the relabelled $T \! \restriction \! \lambda$ as a tree over $\textnormal{REF}^{F}_{s,t'}$. But every branch $\pi$ of $T \! \restriction \! \lambda$ is also a branch of $T$, hence Definition \ref{def:decision-tree-for-REF-F-s-t} with $T$ (which is a tree over $\textnormal{REF}^{F}_{s,t}$) says how $\pi$ should be viewed as a partial assignment to $\text{Var}(\textnormal{REF}^{F}_{s,t})$; let us denote the partial assignment by $\pi_T$ for clarity. It is easy to see from the definitions that for every branch $\pi$ in $T \! \restriction \! \lambda$, $\dom(\lambda) \cap \dom(\pi_{T \restriction \lambda}) = \emptyset$ and $\pi_T \subseteq \lambda \cup \pi_{T \restriction \lambda}$. It follows that $G \! \restriction \! \lambda$ is strongly represented by $T \! \restriction \! \lambda$. The tree $T \! \restriction \! \lambda$ has, of course, height at most $t/2^{k+5}$. 

We can now apply Theorem \ref{thm:shallow_trees_to_small_width_resolution} taking $\text{REF}^F_{s,t'}$ (i.e., the re-indexed $((\textnormal{R}^k\textnormal{REF}^{F}_{s,t} \! \restriction \! \rho) \! \restriction \! \nu) \! \restriction \! \lambda$) for $H$, $t'$ for $t$, and $t/2^{k+5}$ for $h$, to obtain a resolution refutation of $\text{REF}^F_{s,t'}$ of index-width at most $3t/2^{k+5}$. 

But we have 
\begin{equation*}
2 \cdot 3t/2^{k+5} < t/2^{k+2} <  \lfloor t/2^{k+1} \rfloor - 2 = t',
\end{equation*}
where the second inequality follows from \ref{eq:conditions_on_n,r,s,t_in_the_main_theorem} and \ref{eq:condition_on_n_0_in_main_thm}. Therefore, we can use Theorem \ref{thm:width_lb}, taking $3t/2^{k+5}$ for $w$ and $t'$ for $t$, to conclude that any resolution refutation of $\text{REF}^F_{s,t'}$ has index-width greater than $3t/2^{k+5}$. That is a contradiction.
\end{proof}

\section{Proofs of Theorems \ref{thm:res(k)-no-wfdp}
 and \ref{thm:res(k)-not_automatable}}

\begin{proof}[Proof of Theorem \ref{thm:res(k)-no-wfdp}]
Denote by $F$ the well-known CNF $\neg \textnormal{PHP}^{n+1}_n$ called the negation of the pigeonhole principle, expressing that a multi-valued function from $n+1$ to $n$ is injective. It consists of $r := n+1+ (n^3+n^2)/2$ clauses in $\widetilde{n} := (n+1)n$ variables. 

Define $A_n := \textnormal{SAT}^{\widetilde{n},r} \! \restriction \! \gamma_F$, where $\gamma_F$ is as in Proposition \ref{propos:subst-to-SAT}. 

Since by \cite{Krajicek-Pudlak-Woods1995, Pitassi-Beame-Impagliazzo1993} there exists $\alpha > 0$ and an integer $n_1$ such that for every $n \geq n_1$, $\neg \textnormal{PHP}^{n+1}_n$ has no $\text{Res}(k)$ refutations of size at most $2^{n^{\alpha}}$, the same is true for $A_n$. This is because by Proposition \ref{propos:subst-to-SAT} there is a substitution $\tau$ such that $A_n \! \restriction \! \tau$ is $\neg \textnormal{PHP}^{n+1}_n$ together with some tautological clauses, and if $\Pi$ is a $\text{Res}(k)$ refutation of $A_n$ then $\Pi \! \restriction \! \tau$ is a $\text{Res}(k)$ refutation of $A_n \! \restriction \! \tau$. This shows item \ref{item:lower_bound_on_A_n}.

Define $B_{n,k} := \textnormal{R}^k\textnormal{REF}^{F}_{s,t}$, where we set $s := \widetilde{n} +1$ and $t := \widetilde{n}^k$.

Let $\delta > 0$ and integer $n_0$ witness Theorem \ref{thm:main_size_lb_for_RkREF^F_st}. Set $n_2 \geq n_0$ so that the hypotheses \eqref{eq:conditions_on_n,r,s,t_in_the_main_theorem} with $\widetilde{n}$ in place of $n$ hold with our choice of $r,s,t$ (as functions of $\widetilde{n}$) for all $\widetilde{n} \geq n_2$. 
By that theorem, for every $\widetilde{n} \geq n_2$, any $\textnormal{Res}(k)$ refutation of $B_{n,k}$ has size greater than $2^{\beta(k) \widetilde{n}}$. Item \ref{item:lower_bound_on_B_n} follows.

Note that $\textnormal{R}^k\textnormal{REF}^{\widetilde{n},r}_{s,t} \! \restriction \!  \gamma_F$ is $\textnormal{R}^k\textnormal{REF}^{F}_{s,t}$, because $\gamma_F$ turns the clauses \eqref{rrefclause-n-r:axioms} into \eqref{rrefclause:axioms} (and the clauses satisfied by $\gamma_F$ are removed). By Theorem \ref{thm:refl-princ-Upper-bound} there is a $\textnormal{Res}(2)$ refutation of $\textnormal{SAT}^{\widetilde{n},r} \land \textnormal{R}^k\textnormal{REF}^{\widetilde{n},r}_{s,t}$ of size $O(k^2 \widetilde{n}^{3k+3})$. Hence the same holds true for $A_n \land B_{n,k} = \textnormal{SAT}^{\widetilde{n},r} \! \restriction \! \gamma_F \land \textnormal{R}^k\textnormal{REF}^{\widetilde{n},r}_{s,t} \! \restriction \!  \gamma_F$. This gives item \ref{item:upper_bound_on_A_n_and_B_n}.
\end{proof}

Theorem \ref{thm:res(k)-not_automatable} follows immediately from the more general Theorem \ref{thm:res(k)-not_automatable-more_general} below. A function $T: \mathbb{N} \rightarrow \mathbb{N}$ is called \emph{time-constructible} if there is an algorithm that when given $1^n$ (the string of $n$ many 1's) computes $1^{T(n)}$ in time $O(T(n))$. We call a function $T: \mathbb{N} \rightarrow \mathbb{N}$ \emph{subexponential} if $T(n) \leq 2^{n^{o(1)}}$. 

\begin{theorem}
\label{thm:res(k)-not_automatable-more_general}
Let $T: \mathbb{N} \rightarrow \mathbb{N}$ be time-constructible, non-decreasing and subexponential. 
If there is an integer $k \geq 1$ such that $\textnormal{Res}(k)$ is automatable in time $T$, then there are $c_1, c_2, c_3, c_4 > 0$ and an algorithm that when given as input a 3-CNF $F$ in $n$ variables decides in time $c_3 (T(c_1 n^{c_2 k}) + n^{k})^{c_4}$ whether $F$ is satisfiable.
\end{theorem}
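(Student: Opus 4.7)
The plan is a polynomial-time Turing reduction from 3-SAT to the hypothesized $\textnormal{Res}(k)$ automator. Given a 3-CNF $\phi$ in $n$ variables with $r = O(n^3)$ clauses, set $s := n+1$ and $t := c \cdot \max(n^k, n^3)$ for a constant $c$ large enough that the hypotheses of Theorem~\ref{thm:main_size_lb_for_RkREF^F_st} are met once $n$ exceeds some threshold (the finitely many small $\phi$ are handled by brute force). Define $\Phi_\phi := \textnormal{R}^k\textnormal{REF}^{\phi}_{s,t}$; it has size $n^{O(k)}$. The target dichotomy is: if $\phi$ is satisfiable then $\Phi_\phi$ admits a $\textnormal{Res}(k)$ refutation of size $n^{O(k)}$, whereas if $\phi$ is unsatisfiable then no valid $\textnormal{Res}(k)$ refutation of $\Phi_\phi$ can be produced within a subexponential time budget.

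For the satisfiable case, a satisfying assignment $\alpha$ of $\phi$ induces, via the clause-hit encoding of the $T$-variables, an assignment $\mu$ that satisfies $\textnormal{SAT}^{n,r} \!\restriction\! \gamma_\phi$. Theorem~\ref{thm:refl-princ-Upper-bound} yields a $\textnormal{Res}(2)$ refutation $\Pi$ of $(\textnormal{SAT}^{n,r} \land \textnormal{R}^k\textnormal{REF}^{n,r}_{s,t}) \!\restriction\! \gamma_\phi$ of size $n^{O(k)}$. Since the domain of $\mu$ consists only of $T$-variables, which are disjoint from the variables of $\textnormal{R}^k\textnormal{REF}^{\phi}_{s,t}$, the restriction $\Pi \!\restriction\! \mu$ is a $\textnormal{Res}(2)$ refutation of $\Phi_\phi$ of size at most $|\Pi| = n^{O(k)}$, and a fortiori a $\textnormal{Res}(k)$ refutation of that size. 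For the unsatisfiable case, one of two things happens: either $\phi$ already admits a resolution refutation of $s$ levels of $t$ clauses, in which case $\Phi_\phi$ is satisfiable and no $\textnormal{Res}(k)$ refutation exists at all; or $\Phi_\phi$ is unsatisfiable and Theorem~\ref{thm:main_size_lb_for_RkREF^F_st} applied with $F := \phi$ forces every $\textnormal{Res}(k)$ refutation of $\Phi_\phi$ to have length at least $2^{\beta(k)\, t / n^{k-1}} = 2^{n^{\Omega(1)}}$.

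The decision algorithm on input $\phi$ constructs $\Phi_\phi$, simulates the alleged automator on $\Phi_\phi$ for $T(c_1 n^{c_2 k})$ steps (with $c_1,c_2$ chosen so this upper bounds $T(|\Phi_\phi| + n^{O(k)})$), then verifies in polynomial time whether any produced output is indeed a $\textnormal{Res}(k)$ refutation of $\Phi_\phi$; accept as satisfiable if verification succeeds, reject otherwise. Correctness in the satisfiable case follows because the automator's promised runtime $T(|\Phi_\phi| + s_{\textnormal{Res}(k)}(\Phi_\phi))$ fits inside the budget, so a valid refutation is produced and verified. In the unsatisfiable case neither subcase allows a valid refutation to be produced within the budget: no refutation exists in the first subcase, and in the second even the shortest refutation has length $2^{n^{\Omega(1)}}$, which cannot be written within $T(n^{O(k)}) \leq 2^{n^{o(1)}}$ because $T$ is subexponential. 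The most delicate point is precisely this second subcase, where $\Phi_\phi$ is satisfiable and the automator is under no obligation (it may halt quickly with junk); the post-verification step together with soundness of $\textnormal{Res}(k)$ (no satisfiable CNF admits a valid refutation) resolves this cleanly. Book-keeping the various polynomial blow-ups yields the claimed runtime $c_3 (T(c_1 n^{c_2 k}) + n^{k})^{c_4}$.
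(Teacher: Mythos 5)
Your proof follows exactly the same route as the paper: reduce 3-SAT to the $\textnormal{Res}(k)$ automator by sending $\phi$ to $\textnormal{R}^k\textnormal{REF}^{\phi}_{s,t}$, with Theorem~\ref{thm:refl-princ-Upper-bound} giving the short refutation when $\phi$ is satisfiable, Theorem~\ref{thm:main_size_lb_for_RkREF^F_st} ruling out short refutations when $\phi$ is unsatisfiable, and a verification step to catch spurious outputs from the automator.

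There is, however, one genuine gap: the sentence ``\ldots is a $\textnormal{Res}(2)$ refutation of $\Phi_\phi$\ldots, and a fortiori a $\textnormal{Res}(k)$ refutation of that size.'' A $\textnormal{Res}(2)$ refutation is \emph{not} automatically a $\textnormal{Res}(1)$ (i.e., resolution) refutation, so this step fails at $k=1$, which the theorem statement covers. The fix is the observation the paper makes explicitly: every $2$-term occurring in the refutation produced by Theorem~\ref{thm:refl-princ-Upper-bound} has the form $D(i,j,\ell,b) \land T(\ell)^b$ and hence contains a $T$-variable. Since your assignment $\mu$ sets every $T(\ell)$, each such term either vanishes (if $T(\ell)^b$ is falsified) or collapses to the single literal $D(i,j,\ell,b)$ (if $T(\ell)^b$ is satisfied). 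Therefore $\Pi \! \restriction \! \mu$ is in fact a \emph{resolution} refutation of $\Phi_\phi$, hence a $\textnormal{Res}(k)$ refutation for every $k \geq 1$. With this one-sentence patch the proof is complete. Your explicit case split on whether $\Phi_\phi$ itself is satisfiable (which the paper leaves implicit, as Theorem~\ref{thm:main_size_lb_for_RkREF^F_st} is vacuously true there) is a welcome clarification of the soundness of the reduction.
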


\begin{proof}
Assume that for some integer $k \geq 1$ the system $\textnormal{Res}(k)$ is automatable in time $T$ satisfying the assumptions of the theorem. Set $r, s$ and $t$ as functions of $n$ as follows: $r := \binom{2n}{3}$, $s := n+1$, $t := n^{k+3}$. 

By Theorem \ref{thm:refl-princ-Upper-bound} there are integers $c_1, c_2 > 0$ such that $\textnormal{SAT}^{n,r} \land \textnormal{R}^k\textnormal{REF}^{n,r}_{s,t}$ has a $\textnormal{Res}(2)$ refutation $\Pi$ of size at most $c_1 n^{c_2 k}$; if necessary, increase $c_1$ and $c_2$ so that the size of $\Pi$ plus the size of the formula $\textnormal{R}^k\textnormal{REF}^{n,r}_{s,t}$ is at most $c_1 n^{c_2 k}$. 

Let $\delta > 0$ and integer $n_0 > 0$ witness Theorem \ref{thm:main_size_lb_for_RkREF^F_st}.  
Let $n_1 > n_0$ be such that for all $n \geq n_1$,
\begin{equation}
\label{eqn:cond_on_t_in_proof_of_non-automatability}
r \leq t \leq 2^{\delta n} 
\end{equation}
and 
\begin{equation}
\label{eqn:l.b._on_refutations_greater_than_runtime_of_automating}
2^{\beta(k)\frac{t}{n^{k-1}}} > T(c_1 n^{c_2 k}),
\end{equation}
where $\beta(k)$ is as in Theorem \ref{thm:main_size_lb_for_RkREF^F_st}. Here we use that $T$ is subexponential.

Define algorithm $M$ as follows. Given as input a 3-CNF $F$ in $n$ variables, check if $n \geq n_1$. If $n < n_1$, use brute force to decide if $F$ is satisfiable or not, and output the answer. If $n \geq n_1$, compute the formula $\textnormal{R}^k\textnormal{REF}^{F}_{s,t}$ and run the automating algorithm on this formula for up to $T(c_1 n^{c_2 k})$ steps. If the automating algorithm returns a $\textnormal{Res}(k)$ refutation of $\textnormal{R}^k\textnormal{REF}^{F}_{s,t}$, then output `satisfiable'. Else output `unsatisfiable'. 

Since both computing $\textnormal{R}^k\textnormal{REF}^{F}_{s,t}$ from $F$ and checking whether the output of the automating algorithm is a $\textnormal{Res}(k)$ refutation of $\textnormal{R}^k\textnormal{REF}^{F}_{s,t}$ are polynomial-time procedures, and since $T$ is time-constructible, it follows that there are $c_3, c_4 > 0$ such that the running time of $M$ is at most $c_3 (T(c_1 n^{c_2 k}) + n^{k})^{c_4}$. 
It suffices to show that $M$ gives the correct answer on 3-CNFs $F$ in $n \geq n_1$ variables such that each clause of $F$ has exactly three literals. Let $F$ be such a 3-CNF, and let $r'$ be the number of its clauses. We have $r' \leq r = \binom{2n}{3}$.

Assume first that $F$ is satisfiable. Let $\gamma_F$ and $\tau$ be as in Proposition \ref{propos:subst-to-SAT}, and let $\nu$ be a satisfying assignment for $F$. We have
\begin{equation*}
(((\textnormal{SAT}^{n,r'} \land \textnormal{R}^k\textnormal{REF}^{n,r'}_{s,t}) \! \restriction \!  \gamma_F) \! \restriction \!  \tau) \! \restriction \!  \nu 
= ((\textnormal{SAT}^{n,r'} \! \restriction \! \gamma_F) \! \restriction \!  \tau) \! \restriction \!  \nu  \land \textnormal{R}^k\textnormal{REF}^{n,r'}_{s,t} \! \restriction \!  \gamma_F 
= \textnormal{R}^k\textnormal{REF}^{F}_{s,t},
\end{equation*}
because by Proposition \ref{propos:subst-to-SAT}, $( \textnormal{SAT}^{n,r'} \! \restriction \! \gamma_F ) \! \restriction \! \tau$ is $F$ together with some tautological clauses in the variables $x_1,\ldots, x_n$. 
Let $\Pi'$ be the $\textnormal{Res}(2)$ refutation of $\textnormal{SAT}^{n,r'} \land \textnormal{R}^k\textnormal{REF}^{n,r'}_{s,t}$ given by Theorem \ref{thm:refl-princ-Upper-bound}. 
Then $\Pi'' := ((\Pi' \! \restriction \!  \gamma_F) \! \restriction \!  \tau) \! \restriction \!  \nu$ is a $\textnormal{Res}(2)$ refutation of $\textnormal{R}^k\textnormal{REF}^{F}_{s,t}$ (note that it is actually a resolution refutation), and we have
\begin{align*}
\textnormal{size}(\Pi'') + \textnormal{size}(\textnormal{R}^k\textnormal{REF}^{F}_{s,t}) 
& \leq \textnormal{size}(\Pi') + \textnormal{size}(\textnormal{R}^k\textnormal{REF}^{n,r'}_{s,t}) \\
& \leq \textnormal{size}(\Pi) + \textnormal{size}(\textnormal{R}^k\textnormal{REF}^{n,r}_{s,t}) \\
& \leq c_1 n ^{c_2 k}.
\end{align*}
Because $T$ is non-decreasing, the automating algorithm finds within the allotted time $T(c_1 n^{c_2 k})$ a $\textnormal{Res}(k)$ refutation of $\textnormal{R}^k\textnormal{REF}^{F}_{s,t}$, and $M$ outputs `satisfiable'. 

Assume now that $F$ is unsatisfiable. From our choices of $r,s,t$ and $n_1$ and from \eqref{eqn:cond_on_t_in_proof_of_non-automatability} it follows that the hypotheses \eqref{eq:conditions_on_n,r,s,t_in_the_main_theorem} 
 of Theorem \ref{thm:main_size_lb_for_RkREF^F_st} are met for all $n \geq n_1$, and the same is true with $r'$ in place of $r$. By that theorem, any $\text{Res}(k)$ refutation of $\textnormal{R}^k\textnormal{REF}^{F}_{s,t}$ has size greater than $2^{\beta(k) \frac{t}{n^{k-1}}}$. Thanks to \eqref{eqn:l.b._on_refutations_greater_than_runtime_of_automating} this implies that the automating algorithm cannot output any $\text{Res}(k)$ refutation of $\textnormal{R}^k\textnormal{REF}^{F}_{s,t}$ within the allotted time. $M$ therefore outputs `unsatisfiable'. 
\end{proof}

\section{Conclusion}
We have shown that for every integer $k \geq 2$, the system $\textnormal{Res}(k)$ does not have the weak feasible disjunction property and, unless P = NP, it is not automatable. Because of the factor $t/n^{k-1}$ that appears in the exponent of the lower bound in Theorem \ref{thm:main_size_lb_for_RkREF^F_st} and originates in the switching lemma (Theorem \ref{thm:switching_a-DNFs_to_trees}), we have not been able to extend the results to better than barely superconstant $k$. A more important open question is to rule out weak automatability of these systems assuming some standard hardness assumption.

\paragraph{Acknowledgement.} I am grateful to Albert Atserias and Jan Kraj\'{i}\v{c}ek for their valuable comments on an earlier version of the paper. I would like to thank Ilario Bonacina and Moritz M\"uller for several related conversations.

\bibliography{mybiblio}

\end{document}